\newtheorem{lemma}{Lemma}
\newtheorem{corollary}{Corollary}[lemma]
\definecolor{DarkGreen}{RGB}{1,50,32}
\begin{document}
\ActivateWarningFilters[pdftoc]
\newcommand{\defeq}{:=}
\newcommand{\eps}{\varepsilon}
\newcommand{\ReturnCode}{\textbf{return}}

\newcommand{\sidford}[1]{{\color{red} \textbf{Aaron}: #1}} 
\newcommand{\liam}[1]{{\color{red} \textbf{Liam}: #1}} 
\newcommand{\avi}[1]{{\color{purple} \textbf{Avi}: #1}} 
\newcommand{\vvw}[1]{{\color{Plum} \textbf{Virgi}: #1}} 
\newcommand{\uri}[1]{{\color{red} \textbf{Uri}: #1}} 
\newcommand{\new}[1]{{\color{blue} #1}} 

\newcommand{\blue}[1]{{\color{blue}#1}}

\newcommand{\codestyle}[1]{\texttt{#1}}
\newcommand{\Initialize}{\mbox{\codestyle{Initialize}}}
\newcommand{\Cycle}{\mbox{\codestyle{Cycle}}}
\newcommand{\RT}{\mbox{\codestyle{RT}}}
\renewcommand{\L}{\mbox{\codestyle{L}}}
\newcommand{\CycleOdd}{\codestyle{CycleOdd}}
\newcommand{\BallOrCycle}{\codestyle{BallOrCycle}}
\newcommand{\ClusterOrCycleBounded}{\codestyle{ClusterOrCycleBounded}}
\newcommand{\ClusterOrCycle}{\codestyle{ClusterOrCycle}}
\newcommand{\SimpleCycle}{\codestyle{SimpleCycle}}
\newcommand{\Next}{\codestyle{Next}}
\newcommand{\Sample}{\codestyle{Sample}}
\newcommand{\Dijkstra}{\codestyle{Dijkstra}}
\newcommand{\Preprocess}{\codestyle{Preprocess}}
\newcommand{\HashTable}{\codestyle{HashTable}}
\newcommand{\Heap}{\codestyle{Heap}}
\newcommand{\RelaxNext}{\codestyle{RelaxNext}}

\newcommand{\PreprocessGraph}{\codestyle{Initialize}}
\newcommand{\Route}{\codestyle{Route}}
\newcommand{\TreeRoute}{\codestyle{TreeRoute}}
\newcommand{\N}{\mathbb{N}}
\newcommand{\MinCycle}{\codestyle{MinCycle}}
\newcommand{\Query}{\codestyle{Query}}
\newcommand{\Ball}{\codestyle{Ball}}
\newcommand{\DistanceOracle}{\codestyle{TZ-DistanceOracle}}
\newcommand{\SparseOrCycle}{\codestyle{SparseOrCycle}}
\newcommand{\Intersection}{\codestyle{Intersection}}
\newcommand{\CycleAdditive}{\codestyle{CycleAdditive}}
\newcommand{\GenerateSi}{\codestyle{ComputeS}}

\newcommand{\codeNull}{\codestyle{null}}
\newcommand{\codeYes}{\codestyle{Yes}}
\newcommand{\codeNo}{\codestyle{No}}
\newcommand{\codeAnd}{~ \mathrm{and} ~}
\newcommand{\codeOr}{~ \mathrm{or} ~}
\newcommand{\wt}{\ell}
\newcommand{\polylog}{{\rm polylog}}
\newcommand{\Cl}{CL}
\newcommand{\CL}{CL}
\newcommand{\cl}{c\ell}

\newcommand{\LE}{\;\le\;}
\newcommand{\EQ}{\;=\;}
\newcommand{\GE}{\;\ge\;}
\newcommand{\Ot}{\tilde{O}}
\newcommand{\stactri}{\stackrel\triangle}

\newcommand{\EE}{\mathbb{E}}
\newcommand{\RR}{\mathbb{R}}

\newcommand{\ExtractMin}{\mathtt{ExtractMin}}
\newcommand{\Insert}{\mathtt{Insert}}
\newcommand{\Null}{\mathtt{null}}

\DeclarePairedDelimiter{\ceil}{\lceil}{\rceil}
\DeclarePairedDelimiter{\floor}{\lfloor}{\rfloor}
\DeclarePairedDelimiter{\pair}{\langle}{\rangle}

\begin{titlepage}
\pagestyle{empty}

\date{}
\title{Improved girth approximation in  weighted undirected  graphs}

\author{Avi Kadria\thanks{Department of Computer Science, Bar Ilan University, Ramat Gan 5290002, Israel. E-mail {\tt avi.kadria3@gmail.com}.} \and Liam Roditty\thanks{Department of Computer Science, Bar Ilan University, Ramat Gan 5290002, Israel. E-mail {\tt liam.roditty@biu.ac.il}. Supported in part by BSF grants 2016365 and 2020356.} \and Aaron Sidford\thanks{Departments of Management Science and Engineering and Computer Science, Stanford University, Stanford, CA, 94305, USA. E-mail {\tt sidford@stanford.edu}. Supported in part by BSF grant no.\ 2016365, a Microsoft Research Faculty Fellowship, NSF CAREER Award CCF-1844855, NSF Grant CCF-1955039, a PayPal research award, and a Sloan Research Fellowship
}  \and
Virginia Vassilevska Williams\thanks{Department of Electrical Engineering and Computer Science and CSAIL, MIT, Cambridge, MA, USA. E-mail {\tt virg@mit.edu}. 
Supported in part by NSF CAREER Award 1651838, NSF Grants  CCF-1909429 and CCF- 2129139, BSF grants 2016365 and 2020356, a Google Research Fellowship and a Sloan Research Fellowship.} \and Uri Zwick\thanks{Blavatnik School of Computer Science, Tel Aviv University, Tel Aviv 6997801, Israel. E-mail {\tt zwick@tau.ac.il}. Supported in part by BSF grants 2016365 and 2020356.}}
\maketitle

\begin{abstract}
Let $G = (V,E,\ell)$  be a $n$-node $m$-edge weighted undirected graph, where $\ell: E \rightarrow (0,\infty)$ is a real \emph{length} function defined on its edges, and let $g$ denote the girth of $G$, i.e., the length of its shortest cycle.  We present an algorithm that, for any input, integer $k \geq 1$, in $O(kn^{1+1/k}\log{n}  + m(k+\log{n}))$ expected time finds a cycle of length at most $\frac{4k}{3}g$. This algorithm nearly matches a $O(n^{1+1/k}\log{n})$-time algorithm of \cite{KadriaRSWZ22}
which applied to unweighted graphs of girth $3$. For weighted graphs, this result also improves upon the previous state-of-the-art algorithm that
in $O((n^{1+1/k}\log n+m)\log (nM))$ time, where $\ell: E \rightarrow [1, M]$ is an integral length function, finds a cycle of length at most $2kg$~\cite{KadriaRSWZ22}. 
For $k=1$ this result improves upon the result of Roditty and Tov~\cite{RodittyT13}.
\end{abstract}

\tableofcontents

\thispagestyle{empty}
\clearpage
\end{titlepage}

\pagenumbering{arabic}
\setcounter{page}{1}

\section{Introduction}
\label{sec:intro} 

The length of a shortest cycle, known as the \emph{girth} of the graph, is a key parameter often used to shed light on the structure of graph theory problems (e.g., \cite{lazebnik1997structure,osthus2001almost,hoppen2016properties}). 
Correspondingly, the problem of computing a shortest cycle in an undirected graph is fundamental in algorithmic graph theory and has been studied extensively for decades (e.g., \cite{ItaiR78,AYZ97,YusterZ97,LingasL09,DBLP:journals/siamdm/Ducoffe21} and more).
Prominent special cases, e.g., detecting triangles in graphs, are foundational to algorithm design and complexity theory and are useful in practice as well (e.g., \cite[Chapter~3]{kleinbergbook}).

Given the prominence of shortest cycle and girth computation problems, extensive effort has gone into developing fast algorithms for them. The best known runtime for computing the girth of an $n$-vertex, $m$-edge unweighted graph is $O(\min\{n^\omega,mn\})$f \cite{ItaiR78}, where $\omega<2.373$ is the matrix multiplication exponent. For graphs with nonnegative integer edge lengths bounded by $M$, the best known runtime is $\tilde{O}(\min\{Mn^\omega,mn\})$~\footnote{$\tilde{O}(f(n))$ denotes $O(f(n)\textrm{poly}\log(n))$.} \cite{RodittyW11}. 
For arbitrary edge lengths and no negative cycles, the fastest algorithms solve All-PPairs Shortest Paths (APSP), whose fastest running time to date is $O(\min\{mn+n^2\log\log n,n^3/\exp(\sqrt{\log n})\})$ \cite{Pettie04,ryanapsp}.
Improving upon these bounds significantly would constitute a major breakthrough in algorithm design and fine grained complexity; \cite{WilliamsW18,lincolnsoda} clarified this hardness and proved that such improvements would contradict the APSP hardness hypothesis\footnote{The APSP hypothesis states that no $O(n^{3-\eps})$ algorithm exists for computing APSP in a general weighted graph}.

Given the above hardness for exact girth computation, it is natural to ask which trade-offs between running times and approximation ratios are possible. For unweighted graphs, where all edge lengths are~$1$, there has been extensive work on this question, and there are a variety of results depending on the graph's girth and the desired approximation \cite{LingasL09, RodittyT13, RodittyW12, DahlgaardKS17a, KadriaRSWZ22}. Perhaps most relevant to this paper, Kadria et al.~\cite{KadriaRSWZ22}
presented a collection of new algorithms for girth approximation, including an algorithm that for any input unweighted $n$-vertex undirected graph of girth $g$ and an integer parameter $k\ge 1$ finds a cycle of length at most $2k\cdot \lceil g/2\rceil$ in $O(n^{1+1/k}\log n)$ time. This result improved upon a result of Dahlgaard, Knudsen, and St\"{o}ckel~\cite{DahlgaardKS17a} that provided an algorithm which in the same running time of $O(n^{1+1/k} \log n)$ finds a cycle of length at most $2^k g$, for $k \geq 2$, w.h.p. 

Less is known for the weighted problem of approximating the girth of $n$-vertex $m$-edge undirected graphs with edge lengths in the range $[1, M]$ and girth $g$. Until recently, the state-of-the-art running time for this problem included a result of Roditty and Tov~\cite{RodittyT13} which obtained an $\Ot(n^2 \log M)$-time $4/3$-approximation algorithm which improved upon a $2$-approximation algorithm of Lingas and Lundell~\cite{LingasL09}, as well as a result of Ducoffe \cite{DBLP:journals/siamdm/Ducoffe21} which obtained an $O(m+n^{5/3}\polylog M)$-time $2$-approximation algorithm. Unlike in the case of unweighted graphs, no general trade-off between running time and approximation quality was known. 

Kadria et al.~\cite{KadriaRSWZ22} obtained the first running time versus approximation tradeoff for weighted girth computation. They presented an algorithm that for any integer $k \geq 1$ finds a cycle of length at most $2k\cdot g$ in $O((n^{1+1/k}\log n+m)\log (nM))$ time. For $k=1$, this result offers a worse trade-off than both Roditty and Tov~\cite{RodittyT13} and Ducoffe \cite{DBLP:journals/siamdm/Ducoffe21}. Additionally, the approximation quality achieved by this result is almost twice as large as that achievable for unweighted graphs with a comparable running time.

In light of these results, the central question we ask in this paper is: 
\begin{center}
{\em Is it possible to design a single algorithm that yields improved runtime\\ versus approximation quality trade-offs for undirected weighted graphs? }\end{center}
Our main result is the following theorem:

\begin{restatable}[Improved girth approximation]{theorem}{thmalg}
\label{thm:approx}
Let $G = (V,E,\ell)$ be a weighted, undirected graph,  where $\ell: E \rightarrow (0,\infty)$. Let $g$ be the unknown girth of $G$. For every integer $k\geq 1$\footnote{Throughout the paper, we assume that $k\le \log{n}$ since further increasing $k$ no longer reduces the running time}, there is an algorithm whose expected running time is $O(kn^{1+1/k}\log{n}  + m\log{n})$
that finds a cycle $C$ such that $\ell(C)\leq \frac{4k}{3}g$.
\end{restatable}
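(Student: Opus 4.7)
The plan is to adapt the hierarchical ball-growing framework behind the Thorup-Zwick distance oracle, which the prior $2kg$-approximation of \cite{KadriaRSWZ22} also exploits, so as to simultaneously (a) sharpen the factor to $\tfrac{4k}{3}$ and (b) remove the $\log(nM)$ overhead that came from binary search on cycle length. Sample a descending hierarchy $A_0 = V \supseteq A_1 \supseteq \cdots \supseteq A_k = \emptyset$ where each $A_{i+1}$ keeps each element of $A_i$ independently with probability $n^{-1/k}$. For each $v \in V$ and each level $i \in \{0,\ldots,k-1\}$, grow a Dijkstra search rooted at $v$ that halts the moment it first settles a vertex of $A_{i+1}$. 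Crucially, perform cycle detection \emph{inline}: whenever an edge $(u,w)$ is relaxed and $w$ is already settled (or is on the Dijkstra frontier from $v$), close the cycle formed by the tree paths $v\!\rightsquigarrow\!u$ and $v\!\rightsquigarrow\!w$ together with $(u,w)$, and maintain the shortest such cycle found anywhere in the computation. Because cycles are detected inline rather than by guessing a length, the algorithm works directly on arbitrary positive real weights and avoids the $\log(nM)$ factor.

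The runtime follows the Thorup-Zwick-style accounting: the ball grown from $v$ at level $i$ has expected size $O(n^{1/k})$ because $A_{i+1}$ is a $n^{-1/k}$-subsample of $A_i$, so summing over $n$ roots and $k$ levels gives $O(kn^{1+1/k}\log n)$ priority-queue work. A careful amortization of edge relaxations across levels (similar to the bunch-construction analysis of Thorup-Zwick) contributes the additive $O(m\log n)$ term rather than an $O(km)$ or $O(kmn^{1/k})$ one. For correctness, consider the shortest cycle $C$ with $\ell(C)=g$ and use the sampling to select, with high probability, a ``witness'' level $i^*$ and a ``witness'' vertex $v^* \in C$ such that the Dijkstra from $v^*$ at level $i^*$ is forced to explore both halves of $C$ up to its antipodal region before being truncated. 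A standard pigeonhole, using that the sampling rates $n^{-i/k}$ form a geometric progression, produces such a pair $(v^*,i^*)$.

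The main obstacle is the $\tfrac{4}{3}$ savings over the naive $2k$ bound, which must be pushed through each level of the hierarchy rather than only at $k=1$. The Roditty-Tov idea for $k=1$ is that once the Dijkstra from $v^*$ has grown past the antipode of $v^*$ on $C$, the first non-tree edge it encounters along $C$ closes a cycle whose two tree arms each have length at most $\tfrac{2g}{3}$ (otherwise the cycle $C$ itself would have been beaten), giving $\tfrac{4g}{3}$. To generalize, I would decompose the exploration across consecutive levels: if the Dijkstra at level $i^*$ halts by hitting $A_{i^*+1}$ short of fully enclosing $C$, then the hit vertex serves as a root at level $i^*+1$, and the ``half-cycle'' arms of length at most $\tfrac{2g}{3}$ accumulate additively across at most $k$ levels, yielding $\tfrac{4k}{3}g$ in the worst case.

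The part I expect to be the real hard work is the multi-level $\tfrac{4}{3}$-factor argument: one must carefully track which portion of $C$ has been covered at each level, ensure that the detected non-tree edge corresponds to a genuine cycle in $G$ (no spurious vertex repetition from interleaved Dijkstras), and decouple the randomness of the sampling from the geometry of $C$ so that the desired witness pair $(v^*, i^*)$ exists with sufficient probability. Establishing these three properties simultaneously---and verifying that the inline cycle-detection step is both cheap enough to preserve the runtime and expressive enough to capture the Roditty-Tov cycle at every level---is the technical heart of the proof.
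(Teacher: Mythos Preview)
Your plan has two substantive gaps.

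\textbf{Correctness.} The $\tfrac{4}{3}$-per-level saving does not come from the antipodal argument you sketch; the claim that ``the two tree arms each have length at most $\tfrac{2g}{3}$'' is unjustified and is not how the paper proceeds. The paper's analysis splits on whether the heaviest edge of the shortest cycle $C$ satisfies $M(C)\le g/3$. When $M(C)\le g/3$, a ball-containment lemma (Lemma~\ref{lem:main}: if $V_r(u)\cap C\neq\emptyset$ then $C\subseteq G_{r+(\ell(C)+M(C))/2}(u)$) yields $C\subseteq G_{2g/3}(v)$ for $v\in C$, and this is what is pushed inductively through the hierarchy. When $M(C)>g/3$ that containment can fail, and the paper relies on a second mechanism your algorithm does not have: a separate pass over every edge $(v,w)\in E$ and every level $i$ that tests whether $(v,w)$ closes a cycle through the pivot $p_i(v)$, using the stored values $d(p_i(v),v)$ and $d(p_i(v),w)$. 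The entire $M(C)>g/3$ branch (Lemmas~\ref{L-Cycle-Correct-Big} and~\ref{L-Cycle-Correct-Big-2}) hinges on this edge scan catching the heavy edge $(u,u')$ of $C$; inline detection during the cluster searches alone does not suffice, because in the relevant case the cluster search from $p_i(u)$ computes $d(p_i(u),u')$ but does not itself close a cycle through $(u,u')$.

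\textbf{Running time.} The assertion that ``a careful amortization of edge relaxations across levels (similar to the bunch-construction analysis of Thorup--Zwick) contributes the additive $O(m\log n)$ term'' is exactly the obstacle the paper isolates: the standard Thorup--Zwick cluster construction costs $\tilde O(kmn^{1/k})$, and no amortization of ordinary Dijkstra relaxations brings this down. The paper's fix is to replace Dijkstra by Spira's lazy SSSP and to build, during preprocessing, a priority-search-tree structure per vertex so that $\Next(u,v)$ returns in $O(\log n)$ time the next edge out of $v$ that \emph{belongs to} $\Cl(u)$, i.e.\ satisfies $\ell(v,w)-\delta(w,A_{i+1})<-\delta(u,v)$, in non-decreasing order of $\ell$. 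With that oracle, $\ClusterOrCycle(u)$ runs in $O(|\Cl_V(u)|\log n)$ time---vertex count, not edge count---and then $\EE[\sum_u |\Cl_V(u)|]=O(kn^{1+1/k})$ gives the claimed total. This mechanism is the heart of the running-time argument and is absent from your outline.
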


Among the tools we use to prove the theorem include: approximate distance oracles \cite{ThorupZ05}, Spira's single-source shortest paths algorithm \cite{Spira73}, a problem related to 2-dimensional orthogonal range reporting from computational geometry, and an extension of ideas from Kadria et al.~\cite{KadriaRSWZ22}.

Our result (up to logarithmic factors in running time) strictly improves upon Kadria et al.~\cite{KadriaRSWZ22}, computing a $(\frac{4k}{3}g)$-approximation (rather than $2kg$) in a comparable running time. 
Furthermore, the approximation quality versus runtime trade-off matches that of Roditty and Tov~\cite{RodittyT13} for $k = 1$.

We note that improving beyond the approximation factor of $4/3$ approximation of Roditty and Tov~\cite{RodittyT13} in the same running time faces a barrier: any $(4/3-\eps)$-approximation algorithm for $\eps>0$ would be able to detect whether a graph contains a triangle, and via~\cite{WilliamsW18}, we know that triangle detection is closely related to BMM. 
Thus, obtaining a quadratic time combinatorial $(4/3-\eps)$-approximation algorithm would contradict the BMM Hypothesis~\cite{WilliamsW18}.
Moreover, our tradeoff matches the known tradeoff of~\cite{KadriaRSWZ22} for unweighted graphs with $g=3$.
This suggests that our scheme might be of the right form.

Beyond improving upon \cite{KadriaRSWZ22} in approximation quality and matching that of Roditty and Tov~\cite{RodittyT13}, we remark that in contrast to these prior results, the runtime of Theorem~\ref{thm:approx} is strongly polynomial (the running time does not depend on $M$).\footnote{We remark that Roditty and Tov~\cite{RodittyT13} also presented an algorithm with $4/3+\epsilon$ approximation and $\tilde{O}((1/\epsilon) n^2)$ running time, and Ducoffe \cite{DBLP:journals/siamdm/Ducoffe21} also presented an algorithm with $(2+\eps)$ approximation and $\Ot(\polylog(1/\epsilon) n^{5/3} + m)$.}  Consequently, Theorem~\ref{thm:approx} can be applied to graphs with arbitrary real positive edge lengths (assuming that addition and comparisons still take constant time), rather than bounded integer values. We thus obtain a strict improvement over the result of Roditty and Tov~\cite{RodittyT13}: a trade-off curve that works for graphs with real edge weights.

Interestingly, to obtain this result, we depart from the approach of Kadria et al.~\cite{KadriaRSWZ22}. Although the running time of the $2k$-approximation algorithm of \cite{KadriaRSWZ22} was $\tilde{O}(m + n^{1 + 1/k})$, if the edges incident to each vertex were given in non-decreasing order of their lengths, the algorithm could run in $\tilde{O}(n^{1 + 1/k})$ time, which is \emph{sublinear} in the input size for sufficiently dense graphs (and may not need to examine all the edges). To obtain this runtime, the algorithm of \cite{KadriaRSWZ22} is limited in how it accesses the graph. It only accesses the edges incident on a given vertex~$u$ in sequential order. That is, it only accesses the $i$-th edge of~$u$, in non-decreasing order of length, after accessing the previous $i-1$ edges of~$u$. 

We show that this property of \cite{KadriaRSWZ22}'s algorithm necessitated a weaker approximation ratio than what we obtain. We show that if the Erd\"{o}s girth conjecture\footnote{The Erd\"{o}s girth conjecture states that there exists graphs with $\Omega(n^{1+1/k})$ edges with girth $\ge 2k+2$, (see e.g., \cite{Erdos64})} holds, then any algorithm that accesses the edges of a weighted graph in the above sequential fashion and makes only $o(n^{1+1/k})$ queries can at best return a $(2k+2)$-approximation. That is, the $2k$-approximation algorithm of Kadria et al.~\cite{KadriaRSWZ22} is optimal given the (Erd\"{o}s) girth conjecture and how it accesses the graph.

\begin{restatable}[Query lower bound]{theorem}{qlower}\label{thm:lower_bound}
Assume that the girth conjecture holds for an integer $k \geq 1$. 
Then for that $k$ and any real value $\tau>0$, every deterministic
algorithm that, when run on an $n$-vertex weighted undirected graph, accessed using the edge oracle model outlined above, computes a cycle $C$ with $\ell(C) \leq (2k+2-\tau)g$,
must make at least $\Omega(n^{1+1/k})$ queries on some graphs.
\end{restatable}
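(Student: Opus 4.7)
The plan is to run an adversary argument that exploits the sequential edge-oracle constraint together with the Erd\H{o}s extremal structure. By the assumed girth conjecture, fix an $n$-vertex graph $H$ with $m = \Theta(n^{1+1/k})$ edges, girth exactly $2k+2$, and (say) roughly uniform degree $D = \Theta(n^{1/k})$. The adversary uses $H$ as a backbone and plants a small cycle $C^{\star}$ of constant length $L$ with weights chosen so that: (i) $C^{\star}$ is the girth-realizing cycle of the resulting weighted graph, (ii) every cycle not entirely contained in $C^{\star}$ has weight strictly greater than $(2k+2-\tau)\,g$, where $g$ is the weight of $C^{\star}$, and (iii) each $C^{\star}$-edge sits at the tail of the edge ordering at both of its endpoints, behind the $\Theta(D)$ background $H$-edges.

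The adversary then responds adaptively: it maintains the fiction that the graph is $H$, revealing edges in an order consistent with its tie-breaking rule, and commits to the location of $C^{\star}$ only at the end. By (iii), discovering any $C^{\star}$-edge incident to a vertex $v$ requires querying $v$ to depth at least $D+1$. Let $S = \{v : d_v \geq D+1\}$ denote the set of ``deeply probed'' vertices, so the total number of queries satisfies $Q \geq |S|\,(D+1)$. The adversary may place $C^{\star}$ on any set of $L$ vertices in $V\setminus S$ for which the planting is consistent with the responses already given and with the structural constraints of $H$; using the density and cycle-rich structure of the Erd\H{o}s extremal graph, this placement is feasible whenever $|V \setminus S| \geq L$. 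Consequently, the algorithm is forced into $|S| \geq n - O(1)$, which yields $Q = \Omega(n \cdot D) = \Omega(n^{1+1/k})$, matching the claimed bound.

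The main obstacle is the weight assignment for $C^{\star}$: condition (iii) demands that the planted edges be heavier than the background edges (so that they appear at the tail of each endpoint's order), while condition (i) demands that they be light enough to realize the girth, and condition (ii) demands that every background and ``hybrid'' cycle exceed the approximation threshold $(2k+2-\tau)\,g$. These requirements are in tension, and the natural single-planted-short-cycle construction only resolves the tension when $\tau$ is bounded away from $0$. Handling small $\tau$ uniformly will require a more delicate design---for instance, planting an $L$-cycle with $L$ chosen close to $2k+1$ together with an appropriate rescaling of the $H$-edge weights, or planting several interacting short structures---so that condition (ii) can be enforced for every $\tau \in (0, 2k+2)$ simultaneously with conditions (i) and (iii).
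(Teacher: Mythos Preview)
Your adversary framework is the right shape, but the proposal has a genuine gap exactly where you flag it: you never resolve the tension between conditions (i), (iii) and (ii). Planting a short cycle whose edges are heavier than all background edges cannot work, because then the planted cycle itself has weight at least $L$ (with $L\ge 3$), while the shortest background cycle has weight $2k+2$; the best ratio you can force on the algorithm is $(2k+2)/L$, which is bounded away from $2k+2$ and hence cannot rule out a $(2k+2-\tau)$-approximation for small $\tau$. Your suggested fix of taking $L$ close to $2k+1$ only makes this worse, and ``rescaling the $H$-edge weights'' does not help since the ratio is scale-invariant. So as written the argument proves nothing for small $\tau$.

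The paper's construction resolves this tension with a trick you are missing: it introduces a \emph{third} weight class. Each vertex $v$ of the girth-conjecture graph is blown up into a star whose center $s_v$ is joined to several leaves $v_1,\dots,v_{n_v}$ by edges of weight $\epsilon$ (arbitrarily small). The original $H$-edges (weight $1$) are distributed among the leaves, and between consecutive leaves $v_i,v_{i+1}$ one adds a \emph{heavy} edge of weight $g=2k+2$. The $\epsilon$-edges form trees, so by themselves they create no cycle; the heavy edge sits at the tail of each leaf's sorted list behind $\Theta(n^{1/k})$ weight-$1$ edges; and if any one heavy edge is flipped to weight $1$, the triangle $v_i,v_{i+1},s_v$ has length $1+2\epsilon$. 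Thus the girth can be made arbitrarily close to $1$, while any cycle the algorithm can certify without seeing a heavy edge has length at least $2k+2$. A simple counting argument (not the $|S|\ge n-O(1)$ you claim, which is too strong) then shows that with $o(n^{1+1/k})$ queries some heavy edge remains unaccessed, and the adversary flips that one. The missing idea, in short, is: hide the switch edge at the tail, but let the short cycle it closes run through separately added $\epsilon$-weight tree edges that the algorithm sees first but cannot use.
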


\section{Preliminaries}
\label{sec:prelim}

\subsection{Basic concepts}
Let $G = (V, E,\ell)$  be a weighted undirected graph, where $\ell: E \to (0,\infty)$ is a real \emph{length} function defined on its edges. Let $n=|V|$ and $m=|E|$. 
The graph is represented using an adjacency list representation. We assume that the edges incident on a vertex~$u$ are sorted in a non-decreasing order of length. (If not, this can be easily computed in $O(m\log n)$ time.) All graphs considered are assumed to be connected.

For all $u,v\in V$, we let $\delta_G(u,v)$ be the \emph{distance} from~$u$ to~$v$ in~$G$, i.e., the smallest length of a path from~$u$ to~$v$ in $G$. The length $\ell(P)$ of a path $P$ is the sum of the lengths of its edges, i.e., $\ell(P)=\sum_{e\in P}\ell(e)$. (We usually consider a path~$P$ to be a set of edges, but occasionally we also think of it as a set of vertices.) A path from~$u$ to~$v$ is a \emph{shortest path} if and only if $\ell(P)=\delta_G(u,v)$. Since $G$ is undirected, $\delta_G(u,v)=\delta_G(v,u)$ for every $u,v\in V$. When the graph~$G$ is clear from the context, which will almost always be the case, we write $\delta(u,v)$ instead of $\delta_G(u,v)$.

A tree $T$ rooted at $u$ and containing the vertices of a set $U$ is said to be a \emph{shortest paths tree} from~$u$ to the vertices of~$U$ if, for every $v\in U$, the path from~$u$ to~$v$ in~$T$ is a shortest path from~$u$ to~$v$ in~$G$.

If $u\in V$ and $A\subseteq V$, we let $\delta(A,u)=\delta(u,A)=\min_{v\in A} \delta(u,v)$ denote the  \emph{distance from~$u$ to the set~$A$}. (If $A=\emptyset$, then $\delta(u,A)=+\infty$.)
We define the distance $\delta(u,(v,w))$ from a vertex~$u\in V$ to an edge $(v,w)\in E$ as follows: $\delta(u,(v,w))=\min\{\delta(u,v),\delta(u,w)\}+\ell(v,w)$. (Note that $\delta(u,(v,w))=\delta(u,\{v,w\})+\ell(v,w)$. Here $\{v,w\}$ is a set of two vertices.) 

The \emph{girth} $g$ of a graph $G=(V, E,\ell)$ is the length of a shortest simple cycle in~$G$. (If $(u,v)$ is an edge, then $(u,v,u)$ is not considered to be a cycle.) The length of a cycle~$C$ is the sum of the lengths of the edges on~$C$, i.e., $\ell(C)=\sum_{e\in C}\ell(e)$. We also let $M(C)=\max_{e\in C} \ell(e)$ be the maximum edge length on~$C$.

\subsection{Balls}

Given a graph $G=(V,E,\ell)$, a vertex $u\in V$ and $r>0$, we define the \emph{ball graph} $G_r(u) = (V_r(u),E_r(u))$ of \emph{radius} $r$ around~$u$ as follows:
\begin{align*}
    V_r(u) \defeq \{ v\in V \mid \delta(u,v)\leq r\} 
    \text{ and }
    E_r(u) \defeq \{ e\in E \mid \delta(u,e)\leq r\} \;.
\end{align*}
Note that $G_r(u)$ is not necessarily the same as $G[V_r(u)]$, the subgraph of $G$ induced by the vertex set $V_r(u)$ of $G_r(u)$. 
For example, $G[V_r(u)]$  may include edges with length greater than $r$, whereas such edges are excluded from $E_r(u)$.

We let $G_{<r}(u)=(V_{<r}(u),E_{<r}(u))$ denote the \emph{open ball graph of radius~$r$ around~$u$}. The definitions of $V_{<r}(u)$ and $E_{<r}(u)$ are identical to those of $V_{r}(u)$ and $E_{r}(u)$ with the weak inequalities $\delta(u,v)\leq r$ and $\delta(u,e)\leq r$ replaced by strict inequalities.

The following general and simple lemma is useful in proving the correctness of our algorithms and is a natural extension of the main lemma of \cite{KadriaRSWZ22}. 

\begin{lemma} \label{lem:main}
Let  $G = (V, E,\ell)$ be a weighted undirected graph, $C$ a cycle in~$G$, $u\in V$, and $r>0$. If $V_r(u)\cap C\ne \emptyset$, then $C\subseteq G_{r+\frac{1}{2}(\ell(C)+M(C))}(u)$.
\end{lemma}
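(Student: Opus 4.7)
The plan is to fix an arbitrary edge $e=(v,w)\in C$ and show directly that $\delta(u,e)\LE r+\tfrac12(\ell(C)+M(C))$, which is exactly what membership of $e$ in $G_{r+\frac{1}{2}(\ell(C)+M(C))}(u)$ requires. Since this would hold for every edge of $C$, we would also get every vertex of $C$ in the ball graph, proving the lemma.

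First, I would pick a witness $v^*\in V_r(u)\cap C$ (which exists by hypothesis), so that $\delta(u,v^*)\LE r$. Next, I would exploit the cycle structure: removing the edge $e=(v,w)$ from $C$ leaves a simple path $P$ from $v$ to $w$ of total length $\ell(C)-\ell(e)$, and the vertex $v^*$ lies somewhere on this path (even if $v^*\in\{v,w\}$, trivially). This splits $P$ into two subpaths, one from $v^*$ to $v$ and one from $v^*$ to $w$, whose lengths $a$ and $b$ satisfy $a+b=\ell(C)-\ell(e)$.

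The key observation is then that $\min\{a,b\}\LE \tfrac12(a+b)=\tfrac12(\ell(C)-\ell(e))$, so either $\delta(v^*,v)\LE \tfrac12(\ell(C)-\ell(e))$ or $\delta(v^*,w)\LE \tfrac12(\ell(C)-\ell(e))$. Combined with $\delta(u,v^*)\LE r$ and the triangle inequality, this yields
\[
\min\{\delta(u,v),\delta(u,w)\}\;\LE\; r+\tfrac12\bigl(\ell(C)-\ell(e)\bigr).
\]
Plugging this into the definition $\delta(u,e)=\min\{\delta(u,v),\delta(u,w)\}+\ell(e)$ and using $\ell(e)\LE M(C)$ in the remaining $\tfrac12\ell(e)$ term, I obtain $\delta(u,e)\LE r+\tfrac12(\ell(C)+\ell(e))\LE r+\tfrac12(\ell(C)+M(C))$, as required.

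There is no serious obstacle here; the whole argument is a one-shot triangle inequality applied along the cycle. The only mildly subtle step is the choice to first remove the candidate edge $e$ before measuring distances along $C$, which is what lets the ``$\tfrac12\ell(C)$ detour'' absorb one copy of $\ell(e)$ and leaves only the other copy to be bounded crudely by $M(C)$. This slack is precisely why the bound involves $\tfrac12(\ell(C)+M(C))$ rather than just $\tfrac12\ell(C)$, and it is tight when $e$ itself realizes the maximum edge length and $v^*$ sits diametrically opposite to $e$ on $C$.
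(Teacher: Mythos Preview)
Your proof is correct and follows essentially the same approach as the paper's. Both pick a witness vertex on $C$ within distance $r$ of $u$, fix an arbitrary edge of $C$, use the cycle structure to bound the distance from the witness to the nearer endpoint by $\tfrac12(\ell(C)-\ell(e))$, and then apply the triangle inequality; the only cosmetic difference is that you explicitly remove $e$ and split the remaining path, whereas the paper writes the equivalent inequality $\delta(v,x)+\ell(x,y)+\delta(v,y)\le\ell(C)$ directly.
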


\begin{proof} 
Let $v\in V_r(u)\cap C$. By definition $\delta(u,v)\le r$. 
Let $(x,y)\in C$. Assume, without loss of generality, that $\delta(v,x)\le \delta(v,y)$. As $\delta(v,x)+\ell(x,y)+\delta(v,y)\le \ell(C)$, we get that $\delta(v,x)\le \frac12(\ell(C)-\ell(x,y))$. Thus
\begin{align*}
    \delta(u,(x,y)) &\LE \delta(u,v)+\delta(v,x)+\ell(x,y) \\
    &\LE r + \frac12(\ell(C)-\ell(x,y)) + \ell(x,y) \\
    &\EQ r+\frac12(\ell(C)+\ell(x,y)) 
     \LE r+\frac12(\ell(C)+M(C))\;.
\end{align*}
Thus, $(x,y)\in E_{r+\frac{\ell(C)+M(C)}{2}}(u)$, for every $(x,y)\in C$ and therefore $C\subseteq G_{r+\frac{\ell(C)+M(C)}{2}}(u)$, as required.
\end{proof}

\subsection{Clusters}\label{sub-clusters}

Let $G=(V,E,\ell)$ be a weighted undirected graph and let $V=A_0\supseteq A_1 \supseteq A_2 \supseteq \dots \supseteq A_k = \emptyset$ be a hierarchy of vertex sets, where $k \ge 1$. If $u\in A_i\setminus A_{i+1}$, then following \cite{ThorupZ05} we define the \emph{cluster} of~$u$ in~$G$ to be the graph $\Cl(u)=(\Cl_V(u),\Cl_E(u))$, where
\begin{align*}
    \Cl_V(u) &\EQ \{ v\in V \mid \delta(u,v)<\delta(v, A_{i+1}) \} \;, \\
    \Cl_E(u) &\EQ \{ (v,w)\in E \mid \delta(u,v) + \ell(v,w) < \delta(w, A_{i+1}) \} \;.
\end{align*}
Note that, unlike \cite{ThorupZ05}, we define the cluster $\Cl(u)$ to be a graph and not just a vertex set.

For a vertex $u\in V$, we let $a(u)=i$ where $u\in A_i\setminus A_{i+1}$. For any $u\in V$ and $0\le i<k$, we let $p_i(u)=\arg \min_{v\in A_i} \delta(u,v)$, i.e., $p_i(u)$ is a vertex of~$A_i$ closest to~$u$ (ties are broken lexicographically). 

\begin{lemma}\label{L-shortest-path-cluster}
Let $u\in A_i\setminus A_{i+1}$. If $v\in \Cl_V(u)$ and $P$ is a shortest path from~$u$ to~$v$, then all the vertices and edges on~$P$ are also in $\Cl(u)$.
\end{lemma}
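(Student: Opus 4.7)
Write $P$ as the vertex sequence $u = x_0, x_1, \dots, x_t = v$, and note that since $P$ is a shortest $u$--$v$ path in $G$, every prefix $u = x_0, \dots, x_j$ is a shortest $u$--$x_j$ path; in particular $\delta(u, x_j) + \delta(x_j, v) = \delta(u, v)$ for each $j$, and $\delta(u, x_j) + \ell(x_j, x_{j+1}) = \delta(u, x_{j+1})$. These two identities are the only structural facts I will use.

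The main step is to show that $x_j \in \Cl_V(u)$ for every $0 \le j \le t$. I would argue by contradiction: suppose some intermediate vertex $x_j$ satisfies $\delta(u, x_j) \ge \delta(x_j, A_{i+1})$, and pick $y \in A_{i+1}$ achieving $\delta(x_j, y) = \delta(x_j, A_{i+1}) \le \delta(u, x_j)$. Then by the triangle inequality
\[
\delta(v, A_{i+1}) \;\le\; \delta(v, y) \;\le\; \delta(v, x_j) + \delta(x_j, y) \;\le\; \delta(v, x_j) + \delta(u, x_j) \;=\; \delta(u, v),
\]
contradicting the assumption $v \in \Cl_V(u)$, which gives $\delta(u, v) < \delta(v, A_{i+1})$. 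The boundary cases $x_0 = u$ and $x_t = v$ hold directly: $x_t = v$ by hypothesis, and $x_0 = u$ since $u \notin A_{i+1}$ and edge lengths are positive, so $0 = \delta(u, u) < \delta(u, A_{i+1})$.

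Once every vertex on $P$ is in $\Cl_V(u)$, the edge containment is immediate. For each edge $(x_j, x_{j+1})$ on $P$, the prefix-shortest-path identity gives $\delta(u, x_j) + \ell(x_j, x_{j+1}) = \delta(u, x_{j+1})$, and since $x_{j+1} \in \Cl_V(u)$ we have $\delta(u, x_{j+1}) < \delta(x_{j+1}, A_{i+1})$; combining these yields the defining inequality of $\Cl_E(u)$ for the edge $(x_j, x_{j+1})$.

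The only subtle point, and hence the main (mild) obstacle, is being careful that the inequalities remain strict in the right places: the hypothesis $v \in \Cl_V(u)$ is a strict inequality, so even replacing $\delta(x_j, y)$ by the possibly equal quantity $\delta(u, x_j)$ preserves the required strict contradiction. No further case analysis is needed.
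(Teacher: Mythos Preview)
Your proof is correct and follows essentially the same approach as the paper: both argue by contradiction that if some intermediate vertex $x$ on $P$ failed the cluster condition, then routing through the nearest $A_{i+1}$-vertex to $x$ would show $\delta(v,A_{i+1})\le\delta(u,v)$, contradicting $v\in\Cl_V(u)$. The paper dismisses the edge case with ``the proof for the edges on $P$ is similar,'' whereas you spell it out explicitly via the prefix identity $\delta(u,x_j)+\ell(x_j,x_{j+1})=\delta(u,x_{j+1})$ combined with $x_{j+1}\in\Cl_V(u)$; this is a welcome bit of extra care but not a different argument.
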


\begin{proof}
Let $x$ be a vertex on the shortest path~$P$ from~$u$ to~$v$. Assume, for contradiction, that $x\notin\Cl_V(u)$. Let $w=p_{i+1}(x)\in A_{i+1}$. Then, $\delta(w,x)=\delta(x,A_{i+1})\le\delta(u,x)$. It follows that $\delta(w,v)\le \delta(w,x)+\delta(x,v)\le\delta(u,x)+\delta(x,v)=\delta(u,v)$, contradicting that $v\in \Cl_V(u)$. The proof for the edges on~$P$ is similar.
\end{proof}

Clusters have especially nice properties when the hierarchy $V=A_0\supseteq A_1 \supseteq A_2 \supseteq \dots \supseteq A_k = \emptyset$ is obtained using random sampling. Lemma~\ref{L-cluster-size} gives one such property and is proven in \cite{ThorupZ05} using a simple probabilistic argument. 

\begin{lemma}[\cite{ThorupZ05}]\label{L-cluster-size}
If $A_{i+1}$, for $i=0,1,\ldots,k-2$, is obtained by including each vertex of $A_i$ independently with probability $n^{-1/k}$, then $\EE[\sum_{u\in V}|\Cl_V(u)|]=O(kn^{1+1/k})$.
\end{lemma}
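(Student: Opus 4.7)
The plan is to double-count pairs $(u,v)$ with $v \in \Cl_V(u)$ by summing first over $v$ and the unique level $i$ with $u \in A_i \setminus A_{i+1}$. Writing
\[
\sum_{u \in V}|\Cl_V(u)| \;=\; \sum_{i=0}^{k-1}\sum_{v \in V}\bigl|\{u \in A_i \setminus A_{i+1} : v \in \Cl_V(u)\}\bigr|,
\]
it suffices to show that each inner expectation is at most $n^{1/k}$, since summing over the $n$ choices of $v$ and the $k$ choices of $i$ then yields $\EE[\sum_u |\Cl_V(u)|] = O(kn^{1+1/k})$.

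For the top level $i=k-1$, since $A_k = \emptyset$ the defining condition $\delta(u,v)<\delta(v,A_k)$ is vacuous, so the inner count equals $|A_{k-1}|$; with $p \defeq n^{-1/k}$, its expectation is $n \cdot p^{k-1} = n^{1/k}$, as needed.

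For the interesting range $0 \le i \le k-2$, I would fix $v$ and enumerate the vertices of $A_i$ in non-decreasing order of distance from $v$ as $w_1, w_2, \ldots, w_{|A_i|}$ (with a consistent tie-break). The cluster condition $\delta(v, w_j) < \delta(v, A_{i+1})$ together with $A_{i+1} \subseteq A_i$ forces each of $w_1, \ldots, w_j$, all at distance at most $\delta(v, w_j)$ from $v$, to lie outside $A_{i+1}$. Under independent $p$-sampling of $A_{i+1}$ from $A_i$, this event has probability exactly $(1-p)^j$, so the expected inner count is at most $\sum_{j\ge 1}(1-p)^j = (1-p)/p < n^{1/k}$, independently of the realization of $A_i$.

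The only real subtlety is tie-breaking in the distance ordering; because $\Cl_V(u)$ uses a strict inequality, any tie between $w_j$ and a vertex of $A_{i+1}$ already excludes $v$ from $\Cl_V(w_j)$, so this can only strengthen the bound. Everything else is a routine geometric-series computation, and I do not foresee a genuine obstacle beyond this bookkeeping.
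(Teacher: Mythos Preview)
Your argument is correct and is exactly the standard Thorup--Zwick probabilistic argument; the paper does not spell out a proof but simply cites \cite{ThorupZ05}, which proves it the same way you do (double-count by level and target vertex, order $A_i$ by distance from $v$, and bound the probability that the first $j$ vertices avoid $A_{i+1}$ by $(1-p)^j$). Your handling of the tie-breaking subtlety and of the top level $i=k-1$ is also clean.
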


Thorup and Zwick \cite{ThorupZ05} describe a simple modification of Dijkstra's algorithm that allows $\Cl(u)$ to be constructed in $\tilde{O}(|E(\Cl_V(u))|)$ time, where $E(\Cl_V(u))$ is the set of all edges in~$G$ incident on a vertex of $\Cl_V(u)$.\footnote{Using the ideas of the next section we can actually improve the running time needed to compute $\Cl(u)$ to $\tilde{O}(|\Cl_E(u)|$, but this may still be too slow.} All clusters can therefore be constructed in $\tilde{O}(kmn^{1/k})$ time. This is too slow for us as we are aiming for a running time of $\tilde{O}(kn^{1+1/k})$. Our girth approximation algorithm constructs most clusters only partially, until a cycle in them is detected. This is described in the next section.

\subsection{Initialization}\label{sub-initialization}

Next, we describe an initialization algorithm used by our girth approximation algorithm described in Section~\ref{sec:weighted_approx}. The initialization algorithm, $\Initialize(G,k)$, see Algorithm~\ref{A-Initialize}, receives the input graph $G=(V,E,\ell)$ and the parameter~$k\ge 1$. 
The algorithm starts by sampling the vertex hierarchy $V=A_0\supseteq A_1 \supseteq A_2 \supseteq \dots \supseteq A_k = \emptyset$. 

Next, it initializes two empty \emph{hash tables} $d$ and $\pi$ used to store the distances and shortest paths already computed by the algorithm. When the algorithm discovers a distance $\delta(u,v)$ between two vertices $u,v\in V$, it inserts the pair $(u,v)$ into the hash table~$d$ with value 
$\delta(u,v)$. For brevity, we write this as $d(u,v)\gets \delta(u,v)$. When we want to check whether $\delta(u,v)$ was already computed, we search $(u,v)$ in the hash table~$d$. If $(u,v)$ is found we retrieve $\delta(u,v)$. For brevity, we interpret $d(u,v)$ as a search for $(u,v)$ in the hash table~$d$. The search returns $\delta(u,v)$ if $(u,v)$ is in the table, or $+\infty$, if $(u,v)$ is not in the table, i.e., $\delta(u,v)$ is not yet known to the algorithm. (We assume that $d(u,v)$ searches both $(u,v)$ and $(v,u)$, or more efficiently, that all pairs $(u,v)$ stored in the table satisfy $u<v$.)

The hash table $\pi$ is similarly used to represent the shortest paths already found by the algorithm. If $d(u,v)<\infty$, then $\pi(u,v)$ is the last edge on a shortest path from~$u$ to~$v$. Thus, if $d(u,v)<\infty$ and $\pi(u,v)=(w,v)$ then $d(u,v)=d(u,w)+\ell(w,v)$.

We assume that each operation on the hash tables $d$ and $\pi$ takes constant expected time, as this can be achieved using standard hashing techniques. 

\begin{algorithm2e}[t] 
    \DontPrintSemicolon
 	\caption{$\Initialize(G = (V,E,\ell), k)$} \label{A-Initialize}\label{A-Clusters}
 	\BlankLine
    $A_0\gets V$ ; $A_k \gets \emptyset$ \;
    \BlankLine
    \For {$i\gets 1$ {\bf to} $k-1$} { \label{L-initialize-Ai}
    	$A_i\gets \Sample(A_{i-1}$,$n^{-1/k}$) \;
    }
    \BlankLine
    $d\gets\HashTable()$\qquad \tcp{\rm Used to store computed
    distances.} 
    $\pi\gets\HashTable()$\qquad \tcp{\rm Used to store computed
    shortest paths.} 
    \BlankLine
    \For{$i\gets 1$ {\bf to} $k-1$}
    {
        $\Dijkstra(G,A_i)\quad$ \tcp{\rm Finds $\delta(u,A_i)$ and $p_i(u)$ for every $u\in V$.} 
        \For{$u\in V$} {
        $d(p_i(u),u)\gets \delta(u,A_i)$ \;
        }
    }
    \BlankLine
    $\Preprocess(G)$ [Section~\ref{sub-next}] \;

\end{algorithm2e}

The algorithm then computes the distances $\delta(u,A_i)$, for every $u\in V$ and $0\le i<k$. This is easily done by adding an auxiliary vertex $s_i$, connecting it with $0$-length edges to all vertices of~$A_i$ and then running Dijkstra from~$s_i$, as done in \cite{ThorupZ05}. This also computes $p_i(u)=\arg\min_{v\in A_i} \delta(u,v)$ for every $u\in V$ and $0\le i<k$ and a corresponding shortest path from~$u$ to~$p_i(u)$. 

Finally, \Initialize\ calls \Preprocess\ that performs preprocessing operations on the adjacency lists of all vertices. This processing includes sorting each adjacency list in non-decreasing order of edge length, and for every $0\le i<k$ building a binary tree on the edges of the vertex, as explained in Section~\ref{sub-next}. The total cost of all these preprocessing operations is $O(m\log n)$. 

\begin{lemma}\label{L-Init}
$\Initialize(G,k)$ takes $O((m+kn)\log n)$ time.
\end{lemma}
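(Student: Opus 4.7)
The plan is to bound the running time of each of the four phases of $\Initialize(G,k)$ separately and then combine them, invoking the standing assumption $k \le \log n$ from the footnote of Theorem~\ref{thm:approx} to consolidate terms.

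First, I would handle the sampling of the hierarchy. Each call $\Sample(A_{i-1}, n^{-1/k})$ at Line~\ref{L-initialize-Ai} can be implemented by scanning $A_{i-1}$ and making one independent random choice per element, in $O(|A_{i-1}|) = O(n)$ time. Summed over the $k-1$ levels this is $O(kn)$. Creating the empty hash tables $d$ and $\pi$ is $O(1)$.

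Second, I would analyze the main loop of $k-1$ Dijkstra calls. For each $i$, $\Dijkstra(G, A_i)$ is realized by adding an auxiliary source $s_i$ joined by zero-length edges to every vertex of $A_i$ and running a single-source shortest paths computation from $s_i$; this simultaneously yields $\delta(u, A_i)$, $p_i(u)$, and the corresponding shortest-path-tree edges for every $u \in V$. Implemented with a Fibonacci heap, each such call runs in $O(m + n\log n)$ time. The subsequent hash-table insertions $d(p_i(u),u) \gets \delta(u, A_i)$ and the storage of the tree edges into $\pi$ cost expected $O(n)$ per iteration, which is dominated. Hence this phase contributes $O(k(m + n\log n)) = O(km + kn\log n)$ in expectation. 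The concluding $\Preprocess(G)$ call, which sorts each adjacency list and builds the per-vertex auxiliary structures of Section~\ref{sub-next}, costs $O(m\log n)$ by the bound quoted in the text.

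Summing the four contributions gives $O(kn + km + kn\log n + m\log n)$. Applying $k \le \log n$, the stray term $km$ is absorbed into $m\log n$, and the total collapses to $O(m\log n + kn\log n) = O((m + kn)\log n)$, as claimed. The proof poses no real obstacle; the one choice worth flagging is to run each Dijkstra with a Fibonacci heap, since a binary-heap implementation would contribute $km\log n$, which is not absorbed by $(m + kn)\log n$ under $k \le \log n$ alone.
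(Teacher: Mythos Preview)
Your proof is correct and follows essentially the same approach as the paper, which simply states that the $k$ Dijkstra calls take $O(k(m+n\log n))$ time and preprocessing takes $O(m\log n)$. You are more explicit than the paper in invoking the standing assumption $k\le\log n$ to absorb the $km$ term and in noting that Fibonacci heaps are needed for the Dijkstra bound, both of which are left implicit in the paper's two-line argument.
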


\begin{proof}
The $k$ calls to Dijkstra's algorithm take $O(k(m+n\log n))$ time. Preprocessing the adjacency lists takes $O(m\log n)$. 
\end{proof}
Throughout the paper, a graph $G$ is said to be \emph{initialized} if the procedure $\Initialize(G, k)$ has already been called on it.
\subsection{Cycle detection and compact cycle representation}\label{sub-cycle}

When the girth approximation algorithm discovers an edge $(v,w)$ such that both distances $\delta(u,v)$ and $\delta(u,w)$ are known, for some $u\in V$, it checks whether $(v,w) \notin \{\pi(u,v),\pi(u,w)\}$. If so, a cycle is detected. (Recall that $\pi(u,v)$ and $\pi(u,w)$ are last edges on shortest paths from~$u$ to~$v$ and~$w$, respectively.) The actual cycle is composed of the shortest paths from~$u'$ to~$v$ and~$w$, where $u'$ is the LCA (lowest common ancestor) of~$v$ and~$w$ in the shortest paths tree rooted at~$u$, and the edge $(v,w)$.

This cycle and its length can be easily found in time proportional to the number of edges on the cycle. In some cases, faster running times are desired. We thus succinctly represent the discovered cycle by the triplet $(u,v,w)$ and use $\delta(u,v)+\delta(u,w)+\ell(v,w)=d(u,v)+d(u,w)+\ell(v,w)$ as an upper bound on its length. \footnote{In principle, we can use an LCA data structure to find $u'$ and the actual length of the cycle in $O(1)$ time. This complicates the algorithm and does not lead to improved results.}

\section{Algorithm \texorpdfstring{$\ClusterOrCycle$}{ClusterOrCycle}} \label{sec:tools}

In this section, we describe an algorithm $\ClusterOrCycle$ (Algorithm~\ref{A-ClusterOrCycle}) that assumes that the graph has been initialized, and receives as input a vertex $u\in U$. The algorithm either returns the cluster $\Cl(u)$ or a cycle, as stated in the following lemma:

\begin{lemma}\label{L-ClusterOrCycle} Let $G=(V,E,\ell)$ be an initialized weighted undirected graph and let $u\in V$. If $\Cl(u)$ is a tree, then $\ClusterOrCycle(u)$ finds $\Cl(u)$, the distance $\delta(u,v)$ for each $v\in \Cl(u)$, and a tree of shortest paths from~$u$ to all vertices of $\Cl(u)$. Otherwise, if $r>0$ is the smallest number such that $\Cl(u)\cap G_r(u)$ contains a cycle, then $\ClusterOrCycle(u)$ returns a description of a cycle in $\Cl(u)\cap G_r(u)$ whose length is at most~$2r$. Furthermore, it returns $\Cl(u)\cap G_{<r}(u)$ and a tree containing shortest paths from $u$ to all vertices of $\Cl(u)\cap G_{<r}(u)$. $\ClusterOrCycle(u)$ can be implemented in $O(|\Cl_V(u)|\log n)$ time.
\end{lemma}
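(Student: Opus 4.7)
The plan is to implement $\ClusterOrCycle(u)$ as a Spira-flavored Dijkstra from $u$ that respects the cluster membership criterion and halts on the first detected cycle. Write $i = a(u)$ so that $u \in A_i \setminus A_{i+1}$. The algorithm maintains a binary heap $H$ and incrementally builds a shortest path tree $T$ rooted at $u$. Settle $u$ with $d(u,u)=0$ and push $\Next(u,u)$ into $H$, where $\Next(u,v)$ uses the preprocessing from Section~\ref{sub-next} to return, in $O(\log n)$ time, the next edge $(v,x)$ in non-decreasing order of $\ell(v,x)$ such that $d(u,v) + \ell(v,x) < \delta(x, A_{i+1})$, i.e.\ the next edge out of $v$ that qualifies for $\Cl_E(u)$. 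At each iteration, extract the minimum candidate $(d,(v,x))$ from $H$ and proceed as follows: (a) if $x$ is already in $T$ and $(v,x) \neq \pi(u,x)$, report the cycle as the triple $(u,v,x)$ together with the current $T$; (b) if $x$ is already in $T$ and $(v,x) = \pi(u,x)$ (the tree edge seen from the other side), advance $v$'s pointer and push $\Next(u,v)$; (c) otherwise, settle $x$ with $d(u,x)=d$, $\pi(u,x)=(v,x)$, and push both $\Next(u,v)$ and $\Next(u,x)$ into $H$. Terminate with $T = \Cl(u)$ when $H$ empties.

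Correctness divides into two cases. When $H$ empties without ever invoking (a), standard Dijkstra analysis together with the $\Next$ filter shows that $T$'s vertex set is exactly $\Cl_V(u)$ (by Lemma~\ref{L-shortest-path-cluster}, prefixes of shortest paths stay in the cluster, so nothing is missed), and the only edges ever produced by $\Next$ that have passed through the heap are tree edges; hence $\Cl(u)$ equals $T$, a tree. In the cycle branch, let $r^\star$ denote the value $d$ at the case-(a) extraction. Since extractions are in non-decreasing order, every vertex already in $T$ has $d(u,\cdot) < r^\star$, so $T$ coincides with $\Cl(u) \cap G_{<r^\star}(u)$; the edge $(v,x)$ satisfies $\delta(u,(v,x)) = r^\star$ and belongs to $\Cl_E(u)$, so the cycle obtained by concatenating the $u$-to-$v$ and $u$-to-$x$ tree paths with $(v,x)$ lies in $\Cl(u)\cap G_{r^\star}(u)$. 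Its length is at most
\begin{align*}
d(u,v) + \ell(v,x) + d(u,x) \;=\; r^\star + d(u,x) \;\leq\; 2r^\star,
\end{align*}
because $d(u,x) \leq r^\star$ (as $x$ was extracted earlier). Minimality of $r^\star$ follows since any cycle in $\Cl(u) \cap G_r(u)$ for $r < r^\star$ would have had all its edges pass through $H$ with candidate values $\le r$, so its closing edge would have been extracted strictly before the current step.

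For the running time, each case-(c) settling performs $O(1)$ calls to $\Next$ and $O(1)$ heap operations, accounting for $O(|\Cl_V(u)|)$ operations in total. Case-(b) skip events occur at most twice per tree edge (once from each endpoint), contributing another $O(|\Cl_V(u)|)$ heap operations and $\Next$ calls, and case (a) occurs at most once, terminating the algorithm. Since each heap operation and each $\Next$ call is $O(\log n)$, the overall time is $O(|\Cl_V(u)|\log n)$.

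The main obstacle is carefully justifying that the $\Next$ primitive from Section~\ref{sub-next} indeed admits an $O(\log n)$ amortized cost that skips over all edges $(v,x)$ with $d(u,v) + \ell(v,x) \ge \delta(x, A_{i+1})$ without paying one unit per skipped edge; without such a primitive, the naive $\Cl_E(u)$ count would blow up the running time beyond $O(|\Cl_V(u)|\log n)$. Once that primitive is in hand, the rest of the argument reduces to a monotone Dijkstra analysis combined with Spira's bound on the heap operation count.
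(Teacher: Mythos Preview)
Your proof follows the paper's approach closely: both run a Spira-style lazy Dijkstra on $\Cl(u)$ via the $\Next$ primitive and halt at the first extraction that hits an already-settled vertex; the correctness outline and the $O(|\Cl_V(u)|\log n)$ running-time accounting are essentially those of the paper.

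One correction is needed. Your case-(b) test is backwards. When a tree edge is revisited ``from the other side'' it is popped as $(v,x)$ with $v$ the \emph{child} and $x$ the \emph{parent}, so the relevant equality is $\{v,x\}=\pi(u,v)$ (or, more robustly, $\{v,x\}\in\{\pi(u,v),\pi(u,x)\}$ as in Section~\ref{sub-cycle}), not $(v,x)=\pi(u,x)$. As written, case (b) can never fire, and case (a) would then report a spurious two-edge walk $u\!\leadsto\! x\!-\!v\!-\!x$ even when $\Cl(u)$ is a tree. (The paper's pseudocode does not separate this sub-case at all, so your instinct to add it is sound; only the direction of the check needs fixing.)

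Two further minor imprecisions: at detection time some vertices at distance exactly $r^\star$ may already be settled, so the exact containment is $\Cl(u)\cap G_{<r^\star}(u)\subseteq T\subseteq \Cl(u)\cap G_{r^\star}(u)$ rather than equality with $G_{<r^\star}(u)$; and $\delta(u,(v,x))\le r^\star$ rather than $=r^\star$. Neither affects the conclusion. With these repairs the argument goes through and matches the paper's.
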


$\ClusterOrCycle(u)$ goes through the appropriate steps to construct the cluster $\Cl(u)$, but stops early whenever a cycle in $\Cl(u)$ is encountered. This ensures that $\ClusterOrCycle(u)$ can be implemented in time proportional to the number of vertices in $\Cl(u)$, and not to the number of edges in $\Cl(u)$, which would have been too expensive. It uses a modification of Spira's \cite{Spira73} single-source shortest paths algorithm. 

Spira's algorithm assumes that the edges incident on each vertex are sorted in non-decreasing order of length. It may be viewed as a lazy version of Dijkstra's \cite{Di59} algorithm. In certain cases, it may find distances to all vertices without examining all edges. (This is possible as the adjacency lists of all vertices are assumed to be sorted by length. A recent application of Spira's algorithm can be found in \cite{WiZw15}.) 

When Dijkstra's algorithm discovers the distance from the source~$u$ to a new vertex~$v$, it immediately relaxes all the outgoing edges $(v,w)$ of~$v$. Spira's algorithm only relaxes the first outgoing edge of~$v$. The heap~$Q$ used by Spira's algorithm contains edges rather than vertices. 
Relaxing an edge $(v,w)$ amounts to inserting it into~$Q$ with key $d(u,v)+\ell(v,w)$. 
The algorithm also maintains a set~$U$ of vertices whose distance from the source~$u$ has already been found. Initially $U=\{u\}$. In each iteration, Spira's algorithm extracts an edge $(v,w)$ of minimum key from the heap~$Q$. If $w\notin U$, it adds $w$ to~$U$ and sets $d(u,w)\gets d(u,v)+\ell(v,w)$ which is guaranteed to be the distance from $u$ to~$w$. It now relaxes the first edge of~$w$ and the next edge of~$v$, i.e., the edge following $(v,w)$ in the sorted adjacency list of~$v$, if there is such an edge. If $w\in U$, the algorithm simply relaxes the next edge of~$v$. When $U=V$, the algorithm stops, even if there are still edges left in the heap~$Q$ and even if some edges were not examined yet.

The correctness of Spira's algorithm follows easily from the correctness of Dijkstra's algorithm, or can be proved directly using the same ideas used to prove the correctness of Dijkstra's algorithm. 

Algorithm $\ClusterOrCycle(u)$, shown as Algorithm~\ref{A-ClusterOrCycle}, uses the following modification of Spira's algorithm. It starts constructing $\Cl(u)$. The set $\Cl(u)$ denotes the set of vertices of the cluster discovered so far. Initially $\Cl(u)=\{u\}$. When the first edge $(v,w)$ for which $w\in \Cl(u)$ is extracted from the heap~$Q$, the algorithm stops as a cycle in $\Cl(u)$ is discovered, and the algorithm returns the discovered cycle. 

A non-trivial complication arises from the fact that we want $\ClusterOrCycle(u)$ to only examine edges that belong to $\Cl(u)$. Furthermore, for a correct implementation of Spira's algorithm, we need to examine these edges in non-decreasing order of length.

For the high-level description of Algorithm $\ClusterOrCycle(u)$, we assume that we have a function $\Next(u,v)$ that given a vertex~$v$ already known to be in $\Cl(u)$ gives us the next incident edge~$(v,w)$ of~$v$ that leads to a vertex~$w$ also in~$\Cl(u)$, in non-decreasing order of length. If there is no such next edge, then $\Next(u,v)$ returns $\Null$. The implementation of $\Next(u,v)$ is described in Section~\ref{sub-next}. It is shown there that it can be implemented in $O(\log n)$ time.

$\ClusterOrCycle(u)$ uses $\Next(u,v)$ via a function $\RelaxNext(u,v)$, see Algorithm~\ref{A-RelaxNext}, that uses $\Next(u,v)$ to extract the next eligible edge~$e$, if there is any, and relax it, i.e., add $e$ to the heap~$Q$ with key $d(v)+\ell(e)$. 

\begin{algorithm2e}[t]
    \DontPrintSemicolon
    \caption{$\ClusterOrCycle(u)$}\label{A-ClusterOrCycle}
    $d(u,u)\gets 0$ \; $\pi(u,u)\gets \Null$ \; 
    $\cl(u) \gets \{u\}$ \;
    \BlankLine
    $Q\gets \Heap()$ \; 
    $\RelaxNext(u,u)$ \;
    \BlankLine
    \While{$Q\ne\emptyset$}
    {
        \BlankLine
        $(v,w)\gets Q.\ExtractMin()$ \;
        \BlankLine
        \If{$w\in \cl(u)$}
        {\Return $\langle \, (u,v,w)\,,\,d(u,v)+\ell(v,w)+d(u,w)\,\rangle$ \;}
        \BlankLine
        $d(u,w) \gets d(u,v)+\ell(v,w)$ \; $\pi(u,w)\gets (v,w)$ \; $\cl(u)\gets \cl(u)\cup\{w\}$ \;
        \BlankLine
        $\RelaxNext(u,v)$ \; 
        $\RelaxNext(u,w)$ \;
    }
    \BlankLine
    \Return $\cl(u)$ 
\end{algorithm2e}

\begin{algorithm2e}[t]\label{A-RelaxNext}
    \DontPrintSemicolon
    \caption{$\RelaxNext(u,v)$}
    $e\gets \Next(u,v)$ \;
    \If{$e\ne \Null$}
    {
        $Q.\Insert(e,d(u,v)+\ell(e))$ \;
    }
\end{algorithm2e}

We end this section with a proof of Lemma~\ref{L-ClusterOrCycle}.

\begin{proof}[Proof of Lemma~\ref{L-ClusterOrCycle}]
$\ClusterOrCycle(u)$ starts running Spira's algorithm on the implicitly represented cluster graph $\Cl(u)$. The algorithm extracts the edges $(v,w)$ of $\Cl(u)$ from the heap~$Q$ in non-decreasing order of their key $d(u,v)+\ell(v,w)$. When the first edge $(v,w)$ reaching a vertex~$w$ of~$\Cl(u)$ is extracted from~$Q$, then $\delta(u,w)=\delta(u,v)+\ell(v,w)$. The distance $d(u,w)$ is set accordingly, and $w$ is added to~$\Cl(u)$, the set of vertices of the cluster discovered so far. If a second edge $(v',w)$ reaches the same vertex~$w$ is extracted from~$Q$, then a cycle is detected and returned.
If $\Cl(u)$ does not contain a cycle, then from the correctness of Spira's algorithm, the algorithm $\ClusterOrCycle$ returns $\Cl(u)$ as required.

Let $r>0$ be the smallest number, as in the statement of the lemma, such that $\Cl(u)\cap G_r(u)$ contains a cycle. As $\Cl(u)\cap G_{<r}(u)$ does not contain a cycle, $\ClusterOrCycle(u)$ finds distances and shortest paths to all vertices of $\Cl(u)\cap G_{<r}(u)$ before a second edge reaching a vertex is found. The algorithm then starts finding vertices of distance exactly~$r$ from~$u$. As $\Cl(u)\cap G_r(u)$ contains a cycle, at some stage a second edge reaching a vertex in $\Cl(u)\cap G_r(u)$ must be found, and Spira's algorithm is aborted. This edge clearly closes a cycle of length at most $2r$, which is returned by the algorithm, as required.

Spira's algorithm spends $O(\log n)$ time on each edge $(v,w)$ it considers. This includes the $O(\log n)$ time taken by $\Next(u,v)$ [Section~\ref{sub-next}] to return the edge, the $O(\log n)$ (or $O(1)$) time needed to insert the edge to the heap~$Q$, and the $O(\log n)$ time needed for extracting it from the heap.
The size of the heap is always at most the number of vertices in $\Cl(u)$, i.e., the vertices of the cluster discovered so far. 
As long as no cycles are found, the number of edges examined by Spira's algorithm is at most $2|\Cl(u)|-1$: the number of edges extracted from~$Q$ is $|\Cl(u)|-1$ and the number of edges in~$Q$ is at most $|\Cl(u)|$. When a cycle is found, the total number of edges examined is at most $2|\Cl(u)|$. The total running time is therefore $O(|\Cl(u)|\log n)=O(|\Cl(u)|\log n)$, as claimed.
\end{proof}

\subsection{Examining cluster edges in non-decreasing order of length}\label{sub-next}
Recall that if $u\in A_i\setminus A_{i+1}$ then $\Cl(u)=(\Cl_V(u),\Cl_E(u))$, where
\begin{align*}
    \Cl_V(u) &\EQ \{ v\in V \mid \delta(u,v)<\delta(v,A_{i+1}) \} \;, \\
    \Cl_E(u) &\EQ \{ (v,w)\in E \mid \delta(u,v)+\ell(v,w)<\delta(w,A_{i+1}) \} \;.
\end{align*}

Algorithm $\Preprocess$, called by \Initialize, defines $k$ \emph{shifted lengths} as follows, $\ell_i(v,w) = \ell(v,w)-\delta(w,A_{i+1})$ for each edge  $(v,w)\in E$, for every $i\in [0,k-1]$. Now, if $u\in A_i\setminus A_{i+1}$ and $v\in \Cl_V(u)$ then $(v,w)\in\Cl_E(u)$ if and only if $\ell_i(v,w)<-\delta(u,v)$. We want to iterate over the edges of~$v$ that satisfy this condition in increasing order of their original length.

Abstractly, we are faced with the following situation. We have a sequence $e_1,e_2,\ldots,e_n$ of items. (In our concrete situation these are the edges incident on some vertex~$v$ and $n$ is the degree of~$v$, where we already know that $v\in \Cl(u)$ and also have $d(u,v)=\delta(u,v)$.) Each item $e$ has two lengths, $x(e)$ and $y(e)$. (In the concrete case, these are $\ell(e)$ and $\ell_i(e)$, where $u\in A_i\setminus A_{i+1}$.) We are given a bound~$y_0$ and are required to iterate over all items that satisfy $y(e)<y_0$ in non-decreasing order of $x(e)$, until we decide that we do not want to see additional items. (In our case $y(e)=\ell_i(e)$ and $y_0=-\delta(u,v)$.) We want to produce each item in, say, at most $O(\log n)$ time.

This is closely related to the \emph{2-dimensional orthogonal range reporting} problem. In this problem, we are given a collection of $n$ points $(x_j,y_j)$ in the plane. Given four thresholds $a<b$ and $c<d$, we want to return all the points in the box $[a,b]\times[c,d]$, i.e., all the points satisfying $a\le x_j\le b$ and $c\le y_j\le d$. A classical result of Chazelle \cite{Chazelle86}, which improves on a result of Willard \cite{Willard85}, says that this can be done in $O(k+\log n)$ time using $O(n(\log n)/(\log\log n))$ space, where $k$ is the number of points returned.

Our problem is slightly easier, on the one hand, as we have only one threshold~$d$. On the other hand, we want to produce the items satisfying $y_j<d$, one by one, in non-decreasing order of their $x$-coordinate. We are not allowed to first collect all items satisfying $y_j<d$ and then sort them according to their $x$-coordinates, as we may only want to look at the first few points satisfying the condition, or even just the first.


As we have only one threshold, we can solve our problem using ideas borrowed from the \emph{priority search tree} of McCreight \cite{McCreight85}. (These ideas work, in fact, for up to three thresholds.)

We sort the $n$ points according to their $x$-coordinate and put them at the leaves of a binary search tree. (For simplicity, we may assume that~$n$ is a power of~2.) Each node of the tree contains the minimum $y$-coordinate among all the items in its subtree. All these values can be easily computed in $O(n)$ time by letting the value of each vertex be the minimum of the values of its two children.

Given an upper bound $y_0$ we can now easily find the item $(x_j,y_j)$ with the minimum $x$-coordinate that satisfies $y_j<y_0$. First, we check if the minimum $y$-value of the root is less than $y_0$. If not, then there is no point in satisfying the condition. Then, starting at the root, we repeatedly go to the left child if its minimum $y$ value is less than~$y_0$, and to the right child otherwise. The first item can thus be found in $O(\log n)$ time. Similarly, given an item, we can easily find the next item in $O(\log n)$ time. Thus, the first~$k$ items in non-decreasing order of their $x$-coordinates, can be found in $O(k\log n)$ time.


Agarwal \cite{Agarwal22} pointed out a more efficient, but slightly more complicated, solution. Insert the points in non-decreasing order of their $y$-coordinates into a \emph{persistent} red-black tree. (See Sarnak and Tarjan \cite{SarnakT86}.) The keys of the points are their $x$-coordinates. Given a threshold $y_0$, do a binary search on the $y$-coordinates to find the appropriate version of the red-black tree and start listing the items in this tree in non-decreasing order of their $x$-coordinate. Producing the first~$k$ points then takes only $O(\log n + k)$ instead of $O(k\log n)$.

Producing each edge in $O(\log n)$ time is enough for our purposes as we spend $\Omega(\log n)$ time on each edge in any case. Thus, Agarwal's elegant idea does not lead to an improved running time of the whole algorithm.

\section{Girth approximation algorithm}
\label{sec:weighted_approx}

In this section, we prove Theorem~\ref{thm:approx} which we restate for convenience:

\thmalg*

To prove Theorem~\ref{thm:approx} we present an Algorithm $\Cycle$ that receives as an input a weighted undirected graph $G = (V, E,\ell)$ with girth $g$ and an integer parameter $k\ge 1$ and finds a cycle of length at most $\frac{4k}{3}g$. 

The algorithm $\Cycle$ (code in Algorithm~\ref{A-Cycle}) works as follows.  
$\Cycle$ starts by calling $\Initialize(G,k)$. It then sets~$\alpha = 0$ and~$W = \emptyset$. Here, $\alpha$ is an upper bound on the length of the smallest cycle found so far and $W$ is a triplet describing this shortest cycle, as explained in Section~\ref{sub-cycle}.

Next, $\Cycle$ calls $\ClusterOrCycle(u)$ for every $u\in V$. The result of $\ClusterOrCycle(u)$ is the pair 
$\langle \alpha', W'\rangle$. If $\alpha'<\alpha$ then $\alpha$ and $W$ are updated to be  $\alpha'$ and $W'$, respectively. 
Finally, for every $(v,w)\in E$ and every $0 \leq i \leq k-1$, the algorithm checks whether the edge $(v,w)$ closes a cycle in shortest paths tree of $u=p_i(v)$, and if this cycle is shorter than the shortest cycle found so far. More precisely, the algorithm checks whether $d(u,v)$ and $d(u,w)$ are defined by two accesses to the hash table~$d$. If they are defined, they correspond to the actual distances $\delta(u,v)$ and $\delta(u,w)$. Otherwise, they are $+\infty$. Next, the algorithm checks that $\pi(u,v)\ne (w,v)$ and $\pi(u,w)\ne (v,w)$. If this condition holds, then a cycle is indeed formed and $\alpha'=d(u,v)+\ell(v,w)+d(u,w)$ is an upper bound on its length. If $\alpha'<\alpha$ we update $\alpha$ and $W$ accordingly.

\begin{algorithm2e}[t]
    \DontPrintSemicolon
 	\caption{$\Cycle(G = (V,E,\ell), k)$} \label{A-Cycle}
 	\BlankLine
    $\Initialize(G,k)$\;
    \BlankLine
    $\alpha \gets \infty$ ;
    $W \gets \emptyset$\;
    \BlankLine
    \For {$u\in V$}{ \label{L-Alg-Loop1}
          $\langle W',\alpha'\rangle  \gets \ClusterOrCycle(u)$\; \label{L-Alg-ClusterOrCycle}
          \If {$\alpha' < \alpha$} 
          {
          	    $\alpha \gets \alpha'$ ;
          	    $W \gets W'$
          }
       
    }
    \BlankLine
    \For {$(v,w)\in E$}{ \label{L-Alg-Loop2}
    \For{$i\gets 0$ {\bf to} $k-1$}
        {
        $u\gets p_i(v)$ ;
        $\alpha' \gets d(u,v) + \ell(v,w) + d(u,w)$ \;
        \If {$\alpha' < \alpha$ {\bf and} $\pi(u,v) \neq (w,v)$ {\bf and} $\pi(u,w) \neq (v,w)$ }{ 
        \label{L-Cycle-If-pi}
            $\alpha \gets \alpha'$ ;
            $W\gets (u,v,w)$ \;
                  }
        }
    }
    \Return $\langle W,\alpha \rangle$
\end{algorithm2e}

Let $C$ be a shortest cycle in~$G$. We break the correctness proof of $\Cycle$ into two cases: either $M(C)\leq g/3$ or $M(C)> g/3$. (Recall that $M(C)$ is the length of the longest edge on~$C$.).
If $M(C) \leq g/3$, then we show in Lemma~\ref{L-Cycle-Correct-Small} that the first for loop satisfies the desired approximation. 

We begin by considering the case that $M(C) \leq g/3$. We show that if there is  $w \in A_{i}$ that is relatively close to $C$, then either $\Cycle$ finds a cycle within the desired bound or there exists $w' \in A_{i+1}$ that is relatively close to $C$. 

\begin{lemma} \label{L-Cycle-Correct-Small}
Let $C$ be a cycle in $G$ such that $\ell(C)=g$ and $M(C)\leq g/3$. Let $0\leq i\leq  k-1$. If there exists $w \in A_{i}$ such that $\delta(w, C) \le \frac{2i}{3}g$ then either \mbox{\rm $\Cycle$}  finds a cycle of length at most $\frac{4(i+1)}{3}g$, or there exists $w' \in A_{i+1}$ such that $\delta(w', C) \le \frac{2(i+1)}{3}g$.
\end{lemma}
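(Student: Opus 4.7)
Set $R \defeq \tfrac{2(i+1)}{3}g$ and let $r' \defeq \delta(w, C) \leq \tfrac{2i}{3}g$. Applying \autoref{lem:main} with the given $w$, and using $\ell(C) = g$ together with $M(C) \leq g/3$, gives
\[
C \;\subseteq\; G_{\, r' + (\ell(C)+M(C))/2}(w) \;\subseteq\; G_R(w)\,.
\]
The argument then splits on whether the cluster $\Cl(w)$ captures all of $C$.

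If $C \subseteq \Cl(w)$, then $\Cl(w)$ is not a tree, and the smallest radius $\rho$ such that $\Cl(w) \cap G_\rho(w)$ contains a cycle satisfies $\rho \leq R$. By \autoref{L-ClusterOrCycle}, the call $\ClusterOrCycle(w)$---which is indeed made by the first for-loop of $\Cycle$, since that loop iterates over every $u \in V$---returns a cycle of length at most $2\rho \leq \tfrac{4(i+1)}{3}g$, so $\alpha$ is updated to at most $\tfrac{4(i+1)}{3}g$.

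Otherwise, some vertex or edge of $C$ is missing from $\Cl(w)$. If $x \in C$ is not in $\Cl_V(w)$, then by definition $\delta(x, A_{i+1}) \leq \delta(w,x) \leq R$, so $w' \defeq p_{i+1}(x)$ lies in $A_{i+1}$ and satisfies $\delta(w', C) \leq R$. If instead an edge $(x,y) \in C$ is not in $\Cl_E(w)$, orient it so that $\delta(w,x) \leq \delta(w,y)$; the defining strict inequality then fails and yields
\[
\delta(y, A_{i+1}) \;\leq\; \delta(w,x) + \ell(x,y) \;=\; \delta(w,(x,y)) \;\leq\; R,
\]
where the last inequality uses the same computation that proves \autoref{lem:main}. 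Taking $w' \defeq p_{i+1}(y)$ now gives $w' \in A_{i+1}$ with $\delta(w', C) \leq R = \tfrac{2(i+1)}{3}g$.

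The one delicate point I anticipate is aligning the orientation used in the cluster-edge definition with the closer-endpoint orientation picked in the proof of \autoref{lem:main}: both select the endpoint nearer to $w$, so they match, and the bound $r' + \tfrac12(g + g/3) = r' + \tfrac{2g}{3} \leq R$ closes exactly. This tightness is why the approximation factor in \autoref{thm:approx} comes out to $\tfrac{4k}{3}$ rather than anything smaller.
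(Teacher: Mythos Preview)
Your proof is correct and follows essentially the same approach as the paper: use \autoref{lem:main} to place $C$ inside $G_R(w)$, and then argue that either $C\subseteq\Cl(w)$ (so \autoref{L-ClusterOrCycle} gives a cycle of length $\le 2R$) or some vertex of $C$ is within $R$ of $A_{i+1}$. The only organizational difference is that the paper splits first on whether \emph{every} $x\in C$ satisfies $\delta(x,A_{i+1})>R$, which immediately forces $C\subseteq\Cl(w)$ and thereby avoids your separate edge sub-case; your contrapositive phrasing is equally valid but slightly less streamlined.
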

\begin{proof}
Let $0\leq i\leq k-1$ and let $w \in A_{i}$ such that $\delta(w, C) \le \frac{2i}{3}g$. 
If there exists $x \in C$ such that $\delta(x, A_{i+1}) \leq \frac{2(i+1)}{3}g$ then 
there exists $p_{i+1}(x)=w' \in A_{i+1}$ such that $\delta(w', C) \le \frac{2(i+1)}{3}g$, as required. 
Thus, for the rest of the proof we assume that $\delta(x, A_{i+1}) > 2(i+1)g/3$, for every vertex $x\in C$.

We show that $C\subseteq \Cl(w) \cap G_{2(i+1)g/3}(w)$. 
We first show that $C\subseteq  G_{2(i+1)g/3}(w)$.
Let $y=\arg \min_{z\in C}\delta(w, z) $.
Since $\delta(w, y) \le 2ig/3$ we have $y\in V_{2ig/3}(w)\cap C \neq  \emptyset$ and we can apply Lemma~\ref{lem:main} with $r=2ig/3$ and $M(C) \le g/3$ to  get  that $C \subseteq G_{2ig/3 + g/2 + g/6}(w) = G_{2(i+1)g/3}(w)$. By the definition of ball graphs, this implies that $\delta(w, (s,t)) \le 2(i+1)g/3$, for every edge $(s,t)\in C$.

We now show that $C\subseteq \Cl(w)$.
Recall that we are in the case where for every vertex $x\in C$ we have $\delta(x, A_{i+1}) > 2(i+1)g/3$, and thus,  
$\delta(w, (s,t))\le 2(i+1)g/3<d(t, A_{i+1})$ and $\delta(w, (s,t))\leq 2(i+1)g/3< d(s,A_{i+1})$, for every edge $(s,t)\in C$. 

By definition we have $\delta(w, (s,t)) = \min \{\delta(w,s),\delta(w,t)\}+\ell(s,t)$. 
Assume, without loss of generality, that $\delta(w,s) \leq \delta(w,t)$. Thus, $\delta(w,s) + \ell(s,t)=\delta(w,(s,t))<d(t, A_{i+1})$ which implies that $(s,t)\in \Cl_E(w)$ and  $C \subseteq \Cl(w)$.

It follows from Lemma~\ref{L-ClusterOrCycle}  that if $C\subseteq \Cl(w) \cap G_{2(i+1)g/3}(w)$ then  $\ClusterOrCycle(w)$ finds a cycle of length at most $2\cdot 2(i+1)g/3=4(i+1)g/3$, and the claim follows.
\end{proof}

Next, we use Lemma~\ref{L-Cycle-Correct-Small} to prove the following Lemma~\ref{L-Cycle-Correct-Small-2}.

\begin{lemma}\label{L-Cycle-Correct-Small-2}
Let $C$ be a cycle in $G$ such that $\ell(C)=g$ and $M(C)\leq g/3$. Let $0\leq i\leq k-1$. Either \mbox{\rm $\Cycle$} finds a cycle of length at most $\frac{4(i+1)}{3}g$, or there exists $w' \in A_{i+1}$ such that $\delta(w', C) \le \frac{2(i+1)}{3}g$.
\end{lemma}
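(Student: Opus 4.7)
The plan is to prove Lemma~\ref{L-Cycle-Correct-Small-2} by induction on $i$, using Lemma~\ref{L-Cycle-Correct-Small} as the inductive engine. The key observation is that Lemma~\ref{L-Cycle-Correct-Small} already gives exactly the ``hypothesis-to-conclusion'' step we need, so the only missing piece is bootstrapping the hypothesis $\delta(w,C) \leq \tfrac{2i}{3}g$ for $w \in A_i$ at the starting index.

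For the base case $i=0$, I would simply pick any vertex $w$ on the cycle $C$. Since $A_0 = V$, we have $w \in A_0$, and trivially $\delta(w,C) = 0 \leq \tfrac{2\cdot 0}{3}g$. Applying Lemma~\ref{L-Cycle-Correct-Small} with $i = 0$ immediately yields the conclusion: either $\Cycle$ finds a cycle of length at most $\tfrac{4}{3}g$, or there exists $w' \in A_1$ with $\delta(w',C) \leq \tfrac{2}{3}g$.

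For the inductive step, assume the statement holds for index $i-1$ (with $1 \leq i \leq k-1$). Then either $\Cycle$ already finds a cycle of length at most $\tfrac{4i}{3}g$, in which case we are done since $\tfrac{4i}{3}g \leq \tfrac{4(i+1)}{3}g$, or there exists $w \in A_i$ with $\delta(w,C) \leq \tfrac{2i}{3}g$. In the latter case, this $w$ is exactly the vertex required by the hypothesis of Lemma~\ref{L-Cycle-Correct-Small}, which then gives either a cycle of length at most $\tfrac{4(i+1)}{3}g$ or a vertex $w' \in A_{i+1}$ with $\delta(w',C) \leq \tfrac{2(i+1)}{3}g$, as required.

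I do not expect any real obstacle here: the lemma is essentially a syntactic packaging of Lemma~\ref{L-Cycle-Correct-Small} that chains the cluster-or-cycle guarantee across the hierarchy $A_0 \supseteq A_1 \supseteq \cdots \supseteq A_k$. The only mild subtlety is noting that failing to improve the approximation at some earlier level is harmless because the bound $\tfrac{4(i+1)}{3}g$ is monotone in $i$, so a shorter cycle found at level $i' < i$ still satisfies the conclusion for index $i$.
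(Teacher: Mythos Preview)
Your proof is correct and matches the paper's proof essentially verbatim: both argue by induction on $i$, handle the base case by taking any vertex of $C$ (which lies in $A_0=V$ at distance $0$ from $C$) and invoking Lemma~\ref{L-Cycle-Correct-Small}, and in the inductive step use the monotonicity of $\tfrac{4(i+1)}{3}g$ in $i$ together with another application of Lemma~\ref{L-Cycle-Correct-Small}.
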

\begin{proof}
We prove the claim by induction on $i$.
For the base case $i=0$, we have $A_0=V$ and $C\cap A_0\neq \emptyset$. Let $z\in C\cap A_0$. Since $\delta(z,z)=0$, by Lemma~\ref{L-Cycle-Correct-Small} we get that $\Cycle$ either finds a cycle of length  at most $4g/3$ or 
there exists $w \in A_{1}$ such that $\delta(w, C) \le 2g/3$, as required.

Next, we assume the claim holds for $i-1$ and prove the claim for $i$.
Since the claim holds for $i-1$, either $\Cycle$ finds a cycle of length at most $4ig/3$ and since $4ig/3\le 4(i+1)g/3$ the claim holds, or there exists a vertex $w \in A_{i}$ such that $\delta(w, C) \le 2ig/3$. In this case it follows from  Lemma~\ref{L-Cycle-Correct-Small} that either $\Cycle$ finds a cycle of length at most $4(i+1)g/3$ or there exists $w' \in A_{i+1}$ such that $\delta(w', C) \le 2(i+1)g/3$, as required.
\end{proof}

Using Lemma~\ref{L-Cycle-Correct-Small-2}, it is straightforward to establish the correctness of $\Cycle$ in the case where $M(C) \leq g/3$.

\begin{corollary} \label{cor:cycle_correct_small}
Let $C$ be a cycle in $G$ such that $\ell(C)=g$ 
and suppose that $M(C)\leq g/3$.  
\mbox{\rm $\Cycle$} finds a cycle of length at most $\frac{4k}{3}g$.
\end{corollary}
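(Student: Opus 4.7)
The plan is to derive this immediately from Lemma~\ref{L-Cycle-Correct-Small-2} by instantiating $i = k-1$ and observing that the second disjunct becomes impossible because $A_k = \emptyset$.

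More concretely, I would first note that since $\ell(C) = g$ and $M(C) \leq g/3$, all the hypotheses of Lemma~\ref{L-Cycle-Correct-Small-2} are satisfied for every $0 \leq i \leq k-1$. I apply the lemma with $i = k-1$. The two resulting alternatives are: either $\Cycle$ finds a cycle of length at most $\frac{4((k-1)+1)}{3}g = \frac{4k}{3}g$, which is exactly the bound we are trying to prove; or there exists some $w' \in A_k$ with $\delta(w', C) \leq \frac{2k}{3}g$.

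Next, I would rule out the second alternative. By construction in $\Initialize$ (Algorithm~\ref{A-Initialize}, where we set $A_k \gets \emptyset$), the set $A_k$ is empty, so no such $w'$ can exist. Hence the first alternative must hold, and $\Cycle$ returns a cycle of length at most $\frac{4k}{3}g$, proving the corollary.

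There is essentially no obstacle here — the entire content of the corollary is packaged into Lemma~\ref{L-Cycle-Correct-Small-2}, and what remains is the bookkeeping observation that the top level of the hierarchy is empty by design. The only minor care point is to verify that Lemma~\ref{L-Cycle-Correct-Small-2} is indeed stated for the correct range $0 \leq i \leq k-1$ so that $i = k-1$ is a legal choice, which it is.
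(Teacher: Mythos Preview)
Your proposal is correct and matches the paper's own proof essentially verbatim: the paper also applies Lemma~\ref{L-Cycle-Correct-Small-2} with $i=k-1$ and observes that $A_k=\emptyset$ rules out the second alternative.
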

\begin{proof}
When $i=k-1$ we have $A_k=\emptyset$ so there is no $w$ in $A_k$ and from Lemma~\ref{L-Cycle-Correct-Small-2} it follows that $\Cycle$ finds a cycle of length at most $\frac{4k}{3}g$.
\end{proof}

Next, we consider the case in which $M(C) > g/3$.
Let $(u,u')\in C$ such that $\ell(u,u')=M(C)$. 
We will show that if $\min\{d(u, A_{i}), d(u', A_{i})\}$ is relatively small then either $\Cycle$ finds a cycle of length at most $\frac{4(i+1)}{3}g$  or $\min\{d(u, A_{i+1}),d(u', A_{i+1})\}$  is relatively small. 

\begin{lemma} \label{L-Cycle-Correct-Big}
Let $C$ be a cycle in $G$ such that $\ell(C)=g$. Let $M(C)> g/3$,  
$(u, u')\in C$ and $\ell(u,u')=M(C)$. Let $0\leq i\leq k-1$.
If $\min\{d(u, A_{i}), d(u', A_{i})\} \le i \cdot (g-M(C))$
then either \mbox{\rm $\Cycle$} finds a cycle of length at most  $\frac{4(i+1)}{3}g$
or $\min\{d(u, A_{i+1}),d(u', A_{i+1})\} \le (i+1) \cdot (g-M(C))$. 
\end{lemma}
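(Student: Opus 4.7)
My plan is to adapt the structure of the proof of Lemma~\ref{L-Cycle-Correct-Small}, exploiting the presence of the long edge $(u,u')$ in the shortest cycle $C$. By the symmetry of the hypothesis I assume $d(u, A_i) \le i(g - M(C))$ and set $w = p_i(u) \in A_i$, so $\delta(w, u) \le i(g - M(C))$. If either $d(u, A_{i+1}) \le (i+1)(g - M(C))$ or $d(u', A_{i+1}) \le (i+1)(g - M(C))$, the second disjunct of the conclusion holds, so I assume both strict inequalities fail. Using the triangle inequality together with the bound $\delta(u, u') \le \min\{M(C), g - M(C)\}$ (from the two $u$-$u'$ paths in $C$) gives
\begin{align*}
\delta(w, u') \le i(g - M(C)) + \min\{M(C), g - M(C)\} \le (i+1)(g - M(C)),
\end{align*}
where the second step is a short case analysis on whether $M(C) \le g/2$. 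Combined with the strict inequality $(i+1)(g - M(C)) < \delta(u', A_{i+1})$, this gives $u' \in \Cl_V(w)$; similarly $u \in \Cl_V(w)$.

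I would then analyze the call to $\ClusterOrCycle(w)$ made by Algorithm~\ref{A-Cycle} in line~\ref{L-Alg-ClusterOrCycle}. By Lemma~\ref{L-ClusterOrCycle}, there are two cases. In case~(i), $\ClusterOrCycle(w)$ aborts at some radius $r \le \delta(w, u')$, and returns a cycle of length at most $2r \le 2(i+1)(g - M(C)) \le \frac{4(i+1)}{3}g$; the final inequality uses $M(C) > g/3$, and we are done. In case~(ii), $\ClusterOrCycle(w)$ reaches $u'$ and stores $d(w, u') = \delta(w, u')$. In this case the second for-loop of Algorithm~\ref{A-Cycle} (line~\ref{L-Alg-Loop2}) examines the edge $(u, u')$ at level $i$ with pivot $\tilde{w} = p_i(u) = w$, giving cycle-candidate length
\begin{align*}
d(w, u) + \ell(u, u') + d(w, u') \le i(g - M(C)) + M(C) + (i+1)(g - M(C)) = (2i+1)g - 2iM(C),
\end{align*}
which is at most $\frac{4(i+1)}{3}g$ whenever $M(C) > g/3$ (equivalent to $(2i-1)g \le 6iM(C)$, trivial for $i=0$ and direct for $i \ge 1$).

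The main obstacle I anticipate is verifying that in case~(ii) the candidate is accepted by the cycle test on line~\ref{L-Cycle-If-pi}, namely that $\pi(w, u) \ne (u', u)$ and $\pi(w, u') \ne (u, u')$. I plan to argue by contradiction: a violation $\pi(w, u') = (u, u')$ forces $\delta(w, u') = \delta(w, u) + M(C)$; in the regime $M(C) > g/2$, the alternative $u$-$u'$ path $P = C \setminus \{(u,u')\}$ has length $g - M(C) < M(C)$ and hence produces a strictly shorter $w$-$u'$ walk in $G$ through $u$, a contradiction. The complementary regime $M(C) \le g/2$ is more delicate, and I expect it will require either sharpening the inequality on $\delta(w, u')$ to be strict with enough slack, or showing that enough of $P$ lies inside $\Cl(w)$ that Spira's exploration inside $\ClusterOrCycle(w)$ closes a cycle of length at most $g$ directly; the symmetric argument then disposes of $\pi(w, u) = (u', u)$.
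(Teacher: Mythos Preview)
Your overall skeleton matches the paper's proof: reduce to $w=p_i(u)$ with $\delta(w,u)\le i(g-M(C))$, assume both $d(u,A_{i+1})$ and $d(u',A_{i+1})$ exceed $(i{+}1)(g-M(C))$, show $u,u'\in\Cl_V(w)\cap V_{(i+1)(g-M(C))}(w)$, and then split on whether $\ClusterOrCycle(w)$ aborts early or reaches $u'$, in the latter case appealing to the second loop with edge $(u,u')$. The length arithmetic you give for both cases is correct.

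The genuine gap is exactly where you flag it, and your case split on $M(C)>g/2$ versus $g/3<M(C)\le g/2$ is not how the paper resolves it. In the regime $g/3<M(C)\le g/2$ your ``strictly shorter walk'' contradiction fails outright (the alternative $u$--$u'$ path in $C$ has length $g-M(C)\ge M(C)$), and your first fallback idea of ``sharpening the inequality on $\delta(w,u')$'' cannot work either, since $\delta(w,u')$ may genuinely equal $(i{+}1)(g-M(C))$. The paper instead argues uniformly for all $M(C)>g/3$: assuming $\pi(w,u')=(u,u')$, it shows that \emph{every} edge of $C\setminus\{(u,u')\}$ lies in $\Cl(w)\cap G_{(i+1)(g-M(C))}(w)$ (for $(s,t)$ on the path it bounds $\delta(w,s)+\ell(s,t)$ using $\delta(w,u)\le i(g-M(C))$ and the strict inequality $d(u',A_{i+1})>(i{+}1)(g-M(C))$ propagated along $C$ to $d(t,A_{i+1})$), while the edge $(u,u')$ itself is also in that set since it is the last edge of the shortest $w$--$u'$ path. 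Hence $\Cl(w)\cap G_{(i+1)(g-M(C))}(w)$ already contains a cycle, contradicting the assumption that $\ClusterOrCycle(w)$ did not abort by radius $(i{+}1)(g-M(C))$. Your second fallback idea points in exactly this direction; you should commit to it and carry it through rather than leaving it as an alternative.

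One more point: the case $\pi(w,u)=(u',u)$ is \emph{not} symmetric to the above and does not need the heavy argument. Once you fix the WLOG to be $\delta(u,A_i)\le\delta(u',A_i)$ (which is what your ``symmetry of the hypothesis'' really gives), then $\pi(w,u)=(u',u)$ would force $\delta(w,u')<\delta(w,u)=\delta(u,A_i)$, hence $\delta(u',A_i)\le\delta(w,u')<\delta(u,A_i)$, an immediate contradiction. The paper disposes of this case in two lines; you should too.
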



\begin{proof}
Let $0\leq i\leq k-1$ and $\min\{d(u, A_{i}), d(u', A_{i})\} \le i \cdot (g-M(C))$. 
If $\min\{d(u, A_{i+1}),d(u', A_{i+1})\} \le (i+1) \cdot (g-M(C))$ then the claim holds. We can assume, therefore, that $d(u', A_{i+1}) > (i+1) \cdot (g-M(C))$ and $d(u, A_{i+1}) > (i+1) \cdot (g-M(C))$. Additionally, we assume, without loss of generality, that $\delta(u, A_i) \le d(u', A_i)$. Since $\min\{d(u, A_{i}), d(u', A_{i})\} \le i \cdot (g-M(C))$, this implies that $\delta(u, A_i)=\delta(p_i(u),u)\le i(g-M(C))$. 

Let $r$ be the smallest number such that $\Cl(p_{i}(u))\cap G_r(p_i(u))$ contains a cycle. If $r\leq (i+1) \cdot (g-M(C))$ then by Lemma~\ref{L-shortest-path-cluster} $\ClusterOrCycle(p_{i}(u))$, when called, finds a cycle of length at most $2r\leq 2\cdot (i+1) \cdot (g-M(C))$. Since $g-M(C)\leq \frac{2}{3}g$ we have $2r\leq \frac{4(i+1)}{3}g$, and the claim holds.
Thus, we assume that $r> (i+1) \cdot (g-M(C))$. 

Next, we show that $u'\in \Cl_V(p_{i}(u))\cap V_{(i+1) \cdot (g-M(C))}(p_{i}(u))$. 
Since $\ell(C) = g$, $(u, u')\in C$ and $\ell(u,u')=M(C)$  we get that $\delta(u,u') = \min\{M(C), g-M(C)\}\le g-M(C)$. By the triangle inequality, $\delta(p_i(u), u') \leq \delta(u, A_i) + \ell(u,u')$. Combining these two inequalities with our assumption that $\delta(u, A_i) \leq i \cdot (g-M(C))$ yields
\begin{align*}
    \delta(p_i(u), u') &\LE \delta(u, A_i) + \delta(u,u') \\
    &\LE i \cdot (g-M(C)) + \delta(u,u') \\
    &\LE (i+1) \cdot (g-M(C)) 
\end{align*}
Thus, $u'\in V_{(i+1) \cdot (g-M(C))}(p_{i}(u))$.  Since $(i+1) \cdot (g-M(C)) <d(u', A_{i+1})$ we get that $\delta(p_i(u), u') <  d(u', A_{i+1})$ and thus $u'\in \Cl_V(p_{i}(u))$.
We conclude that $u'$ is in the graph $\Cl(p_{i}(u))\cap G_{(i+1) \cdot (g-M(C))}(p_{i}(u))$. 

Now since $r> (i+1) \cdot (g-M(C))$ it follows from Lemma~\ref{L-ClusterOrCycle} that $\ClusterOrCycle(p_{i}(u))$
computes $d(p_i(u),u')=\delta(p_i(u),u')$ and a shortest paths tree rooted at $p_i(u)$ that contains $u'$. 

Next, we show that when $\Cycle$ considers the edge $(u,u')$ it holds that $\pi(p_i(u), u)\neq (u',u)$ and $\pi(p_i(u), u')\neq (u,u')$.
We first show that $\pi(p_i(u), u)\neq (u',u)$. Assume for the sake of contradiction that $\pi(p_i(u), u) = (u',u)$.  This implies that  $\delta(p_i(u), u')< \delta(p_i(u), u)$. Since it always holds that $d(u', A_i)\leq \delta(p_i(u), u')$, we get that $d(u', A_i) < \delta(u, A_i)$, a contradiction to our assumption that $\delta(u, A_i) \le 
\delta(u', A_i)$.

We now show that $\pi(p_i(u), u')\neq (u,u')$.
Assume, for the sake of contradiction, that $\pi(p_i(u), u')= (u,u')$. This implies that 
$\delta(p_i(u),(u,u'))\leq \delta(p_i(u), u')\leq (i+1) \cdot (g-M(C))$, and hence 
$(u,u')$ is in $G_{(i+1) \cdot (g-M(C))}(p_{i}(u))$. 
Since $u'$ is in $\Cl(p_{i}(u))$ it follows from Lemma~\ref{L-shortest-path-cluster} that the shortest path between $p_i(u)$ and $u'$ is in $\Cl(p_{i}(u))$, thus its last edge $(u,u')$ is in $\Cl(p_{i}(u))$.
We conclude that $(u,u')$ is in $\Cl(p_{i}(u))\cap G_{(i+1) \cdot (g-M(C))}(p_{i}(u))$.

Consider a path $C'(u,u')$ between $u$ and $u'$ that uses the edges of $C\setminus \{(u,u')\}$. The length of this path is $g-M(C)$. 
Let  $P(p_i(u), u)$ be a shortest path between $p_i(u)$ and $u$. The length of this path is $\delta(u, A_i)\le i(g-M(C))$. 
The concatenation of $P(p_i(u), u)$ with $C'(u,u')$ is path between $p_i(u)$ and $u'$ of length at most $(i+1)(g-M(C))$ 
and thus the distance between $p_i(u)$ and each of the edges $C\setminus \{(u,u')\}$ is at most $(i+1)(g-M(C))$ which implies the edges of $C\setminus \{(u,u')\}$ are in $G_{(i+1) \cdot (g-M(C))}(p_{i}(u))$.

Let $(s,t)\in C'(u,u')$ and  assume that when going from $u$ to $u'$ on $C'(u,u')$ we first encounter $s$.  Let $C'(t,u')$ be the path from $t$ to $u'$ in $C$ avoiding the edge $(u,u')$.
Next we show that $(s,t)$ satisfies $\delta(p_i(u),s)+\ell(s,t)<d(t, A_{i+1})$, and thus in $\Cl(p_{i}(u))$. 

From the triangle inequality, we get that
$\delta(p_i(u),s)\leq \delta(p_i(u),u)+g-M(C)-\ell(s,t) - \ell(C'(t,u'))$. 
Thus, 
$\delta(p_i(u),s)+\ell(s,t)\leq  \delta(p_i(u),u)+g-M(C) - \ell(C'(t,u'))$.
Since $d(u', A_{i+1}) \leq \ell(C'(t,u')) + d(t, A_{i+1})$ we get that
$\delta(p_i(u),s)+\ell(s,t) \leq i(g-M(C)) + g-M(C)-\ell(C'(t,u'))< d(u', A_{i+1}) - \ell(C'(t,u')) \leq d(t, A_{i+1})$.

We conclude that $(s,t)$ is in $\Cl(p_{i}(u))$. 
Thus, there is a path between $p_i(u)$ and $u'$ in $\Cl(p_{i}(u))\cap G_{(i+1) \cdot (g-M(C))}(p_{i}(u))$ that does not use the edge $(u,u')$.

We reach a contradiction since there is a path between $p_i(u)$ and $u'$ 
in $\Cl(p_{i}(u))\cap G_{(i+1) \cdot (g-M(C))}(p_{i}(u))$ that does not use the edge $(u,u')$ and the edge $(u,u')$ is in  $\Cl(p_{i}(u))\cap G_{(i+1) \cdot (g-M(C))}(p_{i}(u))$, as well. 
However,  $\Cl(p_{i}(u))\cap G_{(i+1) \cdot (g-M(C))}(p_{i}(u))$ does not contain a cycle. 
We conclude that the condition in line~\ref{L-Cycle-If-pi} is true. 

Since $\delta(p_{i}(u), u')\le(i+1) \cdot (g-M(C))$, $\delta(p_{i}(u),u) = \delta(u, A_i) \le i \cdot (g-M(C))$, and $\ell(u, u')=M(C)$ we get that:
$$\delta(p_{i}(u),u) + d(p_{i}(u), u') + M(C)\le i \cdot (g-M(C)) + (i+1) \cdot (g-M(C)) + M(C) = 2i(g-M(C)) + g,$$
and algorithm $\Cycle$ finds a cycle of length at most $2i(g-M(C)) + g\leq \frac{4(i+1)}{3}g$, as required.
\end{proof}

Using Lemma~\ref{L-Cycle-Correct-Big} we show: 


\begin{lemma} \label{L-Cycle-Correct-Big-2}
Let $C$ be a cycle in $G$ such that $\ell(C)=g$. Let $M(C)> g/3$,  
$(u, u')\in C$ and $\ell(u,u')=M(C)$. Let $0\leq i\leq k-1$.
Either \mbox{\rm $\Cycle$} finds a cycle of length at most $\frac{4(i+1)}{3}g$  or $\min\{d(u, A_{i+1}),d(u', A_{i+1})\} \le (i+1) \cdot (g-M(C))$. 
\end{lemma}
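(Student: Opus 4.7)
The plan is to mirror the inductive structure used to derive Lemma~\ref{L-Cycle-Correct-Small-2} from Lemma~\ref{L-Cycle-Correct-Small}, applying Lemma~\ref{L-Cycle-Correct-Big} at each step. The induction is on $i$, with the invariant ``either $\Cycle$ has already found a cycle of length at most $\frac{4(i+1)}{3}g$, or $\min\{d(u,A_{i+1}),d(u',A_{i+1})\}\le (i+1)(g-M(C))$.''

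For the base case $i=0$, observe that $A_0=V$, so in particular $u\in A_0$, giving $\delta(u,A_0)=0\le 0\cdot(g-M(C))$. Hence the hypothesis of Lemma~\ref{L-Cycle-Correct-Big} is trivially satisfied at level $i=0$, and applying that lemma yields directly that either $\Cycle$ finds a cycle of length at most $\frac{4}{3}g$ or $\min\{d(u,A_{1}),d(u',A_{1})\}\le (g-M(C))$, which is the claim for $i=0$.

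For the inductive step, assume the statement holds for $i-1$. Then either $\Cycle$ has already found a cycle of length at most $\frac{4i}{3}g\le \frac{4(i+1)}{3}g$ (and we are done), or $\min\{d(u,A_{i}),d(u',A_{i})\}\le i\cdot(g-M(C))$. In the latter case the hypothesis of Lemma~\ref{L-Cycle-Correct-Big} is satisfied at level $i$, so applying it yields either a cycle of length at most $\frac{4(i+1)}{3}g$ or $\min\{d(u,A_{i+1}),d(u',A_{i+1})\}\le (i+1)\cdot(g-M(C))$, as required.

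There is essentially no obstacle here: the heavy lifting (handling the edge $(u,u')$ that realizes $M(C)$, reasoning about which endpoint is closer to $A_i$, and checking the $\pi$-condition in line~\ref{L-Cycle-If-pi}) is all absorbed into Lemma~\ref{L-Cycle-Correct-Big}. The only thing to verify in passing is the monotonicity observation that an earlier cycle of length at most $\frac{4i}{3}g$ trivially satisfies the weaker $\frac{4(i+1)}{3}g$ bound, which is immediate since $g>0$.
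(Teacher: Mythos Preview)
Your proof is correct and follows essentially the same approach as the paper: both argue by induction on $i$, using $u\in A_0=V$ (so $\delta(u,A_0)=0$) for the base case and invoking Lemma~\ref{L-Cycle-Correct-Big} for the inductive step, with the trivial monotonicity $\frac{4i}{3}g\le\frac{4(i+1)}{3}g$ handling the early-termination branch.
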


\begin{proof}
We prove the claim by induction on $i$.
For the base case, we have $i=0$, thus $A_0=V$ and $C\cap A_0\neq \emptyset$. Let $z\in \{u,u'\}$. Since $\delta(z,z)=0$ we  use Lemma~\ref{L-Cycle-Correct-Big} and get that either $\Cycle$  finds a cycle of length at most
$\frac{4}{3}g$ or 
$\min\{d(u, A_{1}),d(u',A_1)\} \le (g-M(C))$.

Next, we assume the claim holds for $i-1$ and prove the claim for $i$.
Since the claim holds for $i-1$, then  
$\Cycle$ either finds a cycle of length at most $\frac{4i}{3}g$  and the claim holds, or $\min\{d(u, A_{i}),d(u', A_{i})\} \le i \cdot (g-M(C))$ and  we  use Lemma~\ref{L-Cycle-Correct-Big} and get that $\Cycle$ either finds a cycle of length at most
$\frac{4(i+1)}{3}g$ or $\min\{d(u, A_{i+1}),d(u', A_{i+1})\} \le (i+1) \cdot (g-M(C))$, as required.
\end{proof}

Using Lemma~\ref{L-Cycle-Correct-Big-2} it is straightforward to establish the correctness of $\Cycle$ for the case that $M(C) > g/3$.

\begin{corollary} \label{cor:cycle_correct_big}
Let $C$ be a cycle in $G$ such that $\ell(C)=g$ 
and let $M(C) > g/3$.  Let $(u, u')\in C$ and $\ell(u,u')=M(C)$.
\mbox{\rm $\Cycle$} finds a cycle of length at most  $\frac{4k}{3}g$. 
\end{corollary}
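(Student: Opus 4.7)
The plan is to invoke Lemma~\ref{L-Cycle-Correct-Big-2} at the extremal index $i = k-1$, mirroring the strategy used in the proof of Corollary~\ref{cor:cycle_correct_small}. That lemma delivers a dichotomy: either \Cycle{} already finds a cycle of length at most $\frac{4k}{3}g$, or $\min\{d(u, A_k), d(u', A_k)\} \le k \cdot (g - M(C))$. All that remains is to rule out the second alternative.

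To rule it out, I appeal to the construction in \Initialize: the hierarchy is set up with $A_k = \emptyset$, and by the convention established in Section~\ref{sec:prelim} we have $\delta(x, \emptyset) = +\infty$ for every $x \in V$. Consequently $d(u, A_k) = d(u', A_k) = +\infty$, so the minimum in the second branch is infinite and cannot be bounded by the finite quantity $k(g - M(C))$ (finite since $C$ is a cycle with at least two edges, forcing $M(C) < \ell(C) = g$). Only the first branch of the dichotomy can hold, giving exactly the desired conclusion.

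There is no real obstacle: the technical content lives entirely in Lemma~\ref{L-Cycle-Correct-Big-2}, whose induction chains together the per-level guarantees of Lemma~\ref{L-Cycle-Correct-Big}. The corollary simply exploits the terminating condition $A_k = \emptyset$ to collapse the dichotomy to its useful branch.
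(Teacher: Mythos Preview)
Your proposal is correct and follows essentially the same approach as the paper: apply Lemma~\ref{L-Cycle-Correct-Big-2} with $i=k-1$ and use $A_k=\emptyset$ (hence $d(u,A_k)=d(u',A_k)=\infty$) to rule out the second alternative. The only addition is your remark that $k(g-M(C))$ is finite because $M(C)<g$, which is a harmless extra justification.
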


\begin{proof}
Since $A_k=\emptyset$, we get that $d(u, A_{k})=d(u',A_k)=\infty$, thus, it follows from Lemma~\ref{L-Cycle-Correct-Big-2} that 
$\Cycle$ finds a cycle of length at most  $\frac{4k}{3}g$. 
\end{proof}




\begin{lemma}\label{L-Cycle-Running-Time}
The expected running time of \mbox{\rm $\Cycle$} is $O((m+kn^{1+1/k})\log{n} + km)$.
\end{lemma}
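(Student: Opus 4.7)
The plan is to bound the running time of $\Cycle$ by summing the cost of its three stages: the call to $\Initialize$, the loop over $\ClusterOrCycle(u)$, and the final double loop over edges and levels. Each of these has already been analyzed in the paper (or is essentially immediate), so the main work is combining them and invoking the right expectation bound in the right place.

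First, by \autoref{L-Init}, $\Initialize(G,k)$ runs in $O((m+kn)\log n)$ time, which is subsumed by the target bound. Next, for the loop on line~\ref{L-Alg-Loop1}, \autoref{L-ClusterOrCycle} gives that each call $\ClusterOrCycle(u)$ runs in $O(|\Cl_V(u)|\log n)$ time, so the total time spent in this loop is
\begin{align*}
O\Bigl(\log n \cdot \sum_{u\in V} |\Cl_V(u)|\Bigr).
\end{align*}
Taking expectations and applying \autoref{L-cluster-size} (which uses the fact that $\Initialize$ constructs the hierarchy $A_0\supseteq\cdots\supseteq A_k$ by independent sampling with probability $n^{-1/k}$), this is $O(kn^{1+1/k}\log n)$ in expectation.

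Finally, for the double loop on line~\ref{L-Alg-Loop2}, there are $mk$ iterations, and each iteration performs $O(1)$ hash table lookups into $d$ and $\pi$ together with $O(1)$ arithmetic and comparison operations. Since hash table operations on $d$ and $\pi$ take $O(1)$ expected time, each iteration costs $O(1)$ expected time, for a total of $O(mk)$ expected time in this phase. Summing the three contributions gives $O((m+kn)\log n) + O(kn^{1+1/k}\log n) + O(km) = O((m+kn^{1+1/k})\log n + km)$, as desired. The main (and essentially only nontrivial) ingredient is the expectation bound on $\sum_u |\Cl_V(u)|$ from \autoref{L-cluster-size}; everything else is a direct accounting of per-iteration costs using lemmas already established.
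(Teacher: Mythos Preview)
Your proof is correct and follows essentially the same approach as the paper: bound the three stages separately using \autoref{L-Init}, \autoref{L-ClusterOrCycle} combined with \autoref{L-cluster-size}, and a direct count of the $mk$ inner-loop iterations, then sum. The paper's own argument is slightly terser but identical in structure and in the lemmas invoked.
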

\begin{proof}
From Lemma~\ref{L-Init} it follows that the call to $\Initialize$ takes $O((m+kn)\log n)$ time. 
For every $u\in V$ we call to  $\ClusterOrCycle(u)$. From Lemma~\ref{L-ClusterOrCycle} it follows that the running time of $\ClusterOrCycle(u)$ is ${O}(|\Cl_V(u)|\log n)$. From Lemma~\ref{L-cluster-size} it follows that $\EE[\sum_{u\in V}|\Cl_V(u)|]=O(kn^{1+1/k})$. Thus, 
calling  to $\ClusterOrCycle(u)$ for every $v\in V$ takes $O((kn^{1+1/k})\log{n})$ expected running time.

For every $(v,w)\in E$, we iterate over $k$ vertices. The cost of this is $O(km)$.  
\end{proof}

The proof of Theorem~\ref{thm:approx} follows from Corollary~\ref{cor:cycle_correct_small},  Corollary~\ref{cor:cycle_correct_big} and Lemma~\ref{L-Cycle-Running-Time}. 
\section{Lower bound for weighted approximation}
\label{sec:lower}

Here we prove a lower bound for girth computation (Theorem~\ref{thm:lower_bound}) under the following oracle model for accessing the edges of an $n=|V|$ vertex graph $G=(V, E,\ell)$.
Every vertex $v\in V$ has a counter, $c(v)$, initialized at $1$.
The following queries are allowed:
\begin{itemize}
\item For any $j\in \{1,\ldots,n\}$, access the $j$th vertex $v$ of $G$ and return its degree $\deg(v)$,
\item for any $j\in \{1,\ldots,n\}$, access the $j$th vertex $v$ and return the $c(v)$'th edge of $v$ in a predetermined sorted order in terms of non-decreasing edge weights; then increment $c(v)$. 
\end{itemize}

In other words, at any point, the algorithm can access the next weighted edge out of any vertex, so that to see the $i$th edge out of a vertex, the algorithm must have accessed all $i-1$ edges before it in the sorted order. For this model we prove Theorem~\ref{thm:lower_bound} restated below:

\qlower*

\newcommand{\davg}{d_\mathrm{avg}}

To prove Theorem~\ref{thm:lower_bound}, we provide a transformation from any unweighted graph of a given girth to a weighted graph where the girth has only increased and there are many vertices of large degree, where, if their largest incident edge length is sufficiently decreased, then the girth is decreased. The number of vertices and the degree in this transformation depend only on the average vertex degree in the original graph. Applying this transformation to a high-girth graph of large average degree, randomly decreasing the length of the longest edge incident to one of these high-degree vertices, and slightly perturbing the edge weights yields the distribution of graphs for our lower bound. 

In the rest of this section, we first provide the transformation in Lemma~\ref{lem:short_cycle} below and then we use it to prove Theorem~\ref{thm:lower_bound}. Lemma~\ref{lem:short_cycle} proves more properties about the transformation that are actually needed to prove the lower bound, but we include the proof as it is illustrated and of possible independent interest.

\begin{lemma}[Weighted Short-cycle Planting]
\label{lem:short_cycle}
For all $\epsilon \in [0, 1)$ if there exists an $n_0$-vertex $m_0$-edge unweighted graph of girth $g \in [3,\infty)$ then there exists a $n$-vertex $m$-edge weighted graph $G = (V,E,\ell)$ and vertex subset $S \subseteq V$ with the following properties:
\begin{itemize}
    \item \textbf{sizes}: $n \in [3 n_0, 4 n_0]$, $|S| \geq n_0$, and  $\ell_e \in [\epsilon, g]$ for all $e \in E$.
    \item \textbf{girth}: the girth of $G$ is at least $g$.
    \item \textbf{cycle planting}: each vertex in $S$ is incident to exactly one edge of length $\epsilon$, between $1$ and $2$ edges of length $g$, and between $\lfloor m_0 / (2n_0) \rfloor$ and $\lceil m_0 / n_0 \rceil$ edges of length $1$. Each length $g$ edge has both endpoints in $S$, and if it is changed to have length $1$, then the resulting graph has a cycle of length $1 + 2\epsilon$. 
\end{itemize}
\end{lemma}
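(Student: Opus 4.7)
The plan is to build $G$ by splitting each vertex of $H$ into two $S$-copies joined through a helper gadget, placing every edge of $H$ as a length-$1$ edge between the copies, and using a length-$g$ chord across the gadget that becomes the short-cycle witness. Specifically, for each $v \in V_0$ I would introduce two new vertices $v^{(0)}, v^{(1)}$ (which will belong to $S$) and one helper $h_v \notin S$, add the length-$\epsilon$ edges $(v^{(0)}, h_v)$ and $(v^{(1)}, h_v)$, and add a single length-$g$ edge $(v^{(0)}, v^{(1)})$. Every edge $(u,v) \in E_0$ is then placed as a length-$1$ edge between one ordered pair of copies $(u^{(i)}, v^{(j)})$, with the assignment chosen so that each $v$ sends $\lceil d_v/2 \rceil$ of its incident $H$-edges to $v^{(0)}$ and $\lfloor d_v/2 \rfloor$ to $v^{(1)}$; such a balanced orientation of $E_0$ exists by a standard Eulerian argument.

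The size and degree accounting is then immediate: $|V| = 3 n_0$, $|S| = 2 n_0 \geq n_0$, every $v^{(i)}$ has exactly one length-$\epsilon$ edge (to $h_v$) and exactly one length-$g$ edge (to $v^{(1-i)}$), and when $H$ is close to $\davg$-regular every $v^{(i)}$ has length-$1$ degree in $[\lfloor m_0/(2 n_0) \rfloor, \lceil m_0/n_0 \rceil]$. For a general $H$ I would first preprocess by rebalancing the degree sequence, splitting any vertex of excessive degree into several $S$-copies that share a helper and are joined by a short length-$g$ path (this is the role of the \emph{second} allowed length-$g$ edge at interior copies) and merging or discarding extremely low-degree vertices, so that the final per-vertex counts meet the lemma's thresholds; this bookkeeping is where I expect the most care to be required.

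The heart of the argument is that $G$ still has girth at least $g$. Any cycle using a length-$g$ edge is already of length $\geq g$, so I would focus on a simple cycle $C$ of $G$ using only length-$1$ and length-$\epsilon$ edges, and let $a$ be the number of length-$1$ edges it contains. Every helper $h_v$ has degree two among the relevant edges, so each time $C$ visits $h_v$ it must enter via $v^{(i)}$ and exit via $v^{(1-i)}$, contributing $2\epsilon$ to $\ell(C)$. Projecting $C$ to $H$ via the collapse map $v^{(0)}, v^{(1)}, h_v \mapsto v$ and erasing each pair of steps across a helper gives a closed walk in $H$ whose length equals $a$. Since distinct length-$1$ edges of $C$ arise from distinct $H$-edges, this walk is edge-simple in $H$; any nontrivial edge-simple closed walk in a graph of girth $g$ decomposes into edge-disjoint simple cycles each of length $\geq g$, so $a \geq g$ and hence $\ell(C) \geq a \geq g$. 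The case $a=0$ is vacuous, because a cycle of only length-$\epsilon$ edges would need the chord $(v^{(0)}, v^{(1)})$, which has length $g$.

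The planted-cycle property falls out of the construction: shortening $(v^{(0)}, v^{(1)})$ from $g$ to $1$ immediately closes the triangle $v^{(0)} \to v^{(1)} \to h_v \to v^{(0)}$ of length $1 + \epsilon + \epsilon = 1 + 2\epsilon$. The main obstacle I anticipate is the degree-window bookkeeping when $H$ is irregular; the girth argument itself is clean once the right projection map is set up.
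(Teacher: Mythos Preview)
Your construction is essentially the paper's construction specialized to the case where every vertex gets exactly two copies; the paper's proof uses the same star-plus-path gadget but lets the number of copies of $v$ be $n_v = \lceil 2\deg(v)/\davg\rceil$, which automatically enforces the length-$1$ degree window $[\lfloor m_0/(2n_0)\rfloor,\lceil m_0/n_0\rceil]$ at every copy without any preprocessing. Your projection/girth argument is the same as the paper's (and is fine).

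The place where your writeup is incomplete is exactly the one you flag: the degree bookkeeping for irregular $H$. Two remarks. First, the ``splitting high-degree vertices into several $S$-copies that share a helper and are joined by a length-$g$ path'' that you describe \emph{is} the paper's construction --- there is no separate preprocessing step, one simply takes $n_v$ copies from the start and distributes the $\deg(v)$ incident edges round-robin among them; the degree bounds $\lfloor \deg(v)/n_v\rfloor$ and $\lceil \deg(v)/n_v\rceil$ then land in the required window by a two-line calculation. Second, your proposed handling of low-degree vertices (``merging or discarding'') is a wrong turn: merging can shorten cycles and discarding changes $n_0,m_0$. The paper instead observes that a vertex with $\deg(v) < \davg/2$ gets $n_v = 1$, so it receives no length-$g$ edge and is simply \emph{not placed in $S$}; since $\sum_v n_v \ge 2n_0$, at least $n_0$ vertices still land in $S$, and the $|S|\ge n_0$ bound survives. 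Once you make the copy count degree-dependent in this way, everything else in your outline goes through verbatim.
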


\begin{proof}
Let $G_0 = (V_0,E_0,\ell_0)$ be an $n_0$-vertex, $m_0$-edge, unweighted graph ($\ell_0(e) = 1$ for all $e \in e$) of girth $g$. Further, for all $v \in V_0$, let $\deg(v)$ denote the degree of $v$ in $G_0$ and let $\davg \defeq \frac{2m_0}{n_0} = \frac{1}{n_0} \sum_{v \in V_0} \deg(v)$ denote the average degree of the vertices of $G_0$.

Given $G_0$, we construct $G = (V,E,\ell)$ and $S$ from it as follows. Informally, $G$ is the result of replacing every vertex $v$ in $G_0$ with a star connecting $\lceil \frac{2 \deg(v)}{\davg} \rceil$ vertices with edges of length $\epsilon$ and then dividing the edges of the original graph evenly over these new vertices (see Figure~\ref{fig:lbfig} below for a picture). 

\begin{figure}[ht]
\centering
\includegraphics{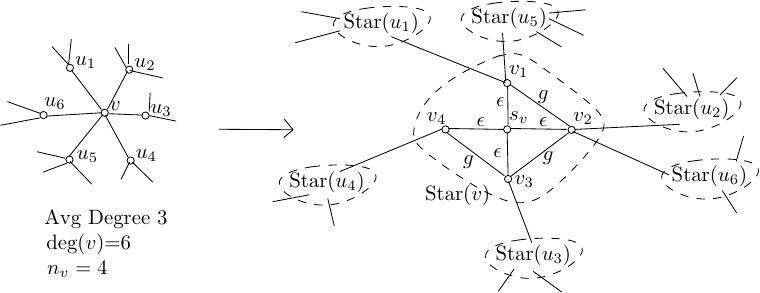}
\caption{An example of the lower bound construction. Here each vertex $v$ of $G_0$ gets replaced by a construction Star($v$) and each edge $(v,v')$ of $G_0$ is now between Star($v$) and Star($v'$) as shown.}
\label{fig:lbfig}
\end{figure}

Formally, to construct $G$ and $S$, we start with $G = G_0$ and $S = \emptyset$ and then apply the following procedure for every $v \in V_0$ one at a time. First, we create new vertices $v_1, \ldots, v_{n_v}$ for $n_v \defeq \lceil \frac{2 \deg(v)}{\davg} \rceil$ as well as a vertex $s_v$. Then for the edges $(v,u_1),\ldots,(v,u_{\deg(v)})$ incident to $v$ we replace them with the edges $(v_{(1 \,\mathrm{mod}\, n_v) + 1},u_1),\ldots,(v_{(\deg(v) \,\mathrm{mod}\, n_v) + 1},u_{\deg(v)}) \in E_0$ each of length $1$. After this, we delete the previous edges and $v$. Further, we add an edge $(s_v,v_i)$ of length $\epsilon$ for all $i \in [n_v]$. Finally, for all $i \in [n_v - 1]$ we add an edge of length $g$ between $v_i$ and $v_{i + 1}$ (note that if $n_v = 1$, then we add no such edges of length $g$) and if $n_v > 1$ then $v_1,\ldots,v_{n_v}$ is added to $S$. 

In the remainder of the proof, we show that $G$ and $S$ have the desired properties:

\paragraph{Sizes:} For every vertex $v \in V_0$ we add $n_v + 1$ vertices to $G$. Consequently, \[
n =\sum_{v \in V_0} (n_v + 1) = n_0 + \sum_{v \in V_0} \left\lceil \frac{2 \deg(v)}{\davg} \right\rceil
\,.
\]
Further, note that $\sum_{v \in V} n_v \in [2n_0, 3n_0]$ as $\sum_{v \in V_0} \frac{2 \cdot \deg(v)}{\davg} = \frac{4 m_0}{\davg} = 2n_0$. Consequently, $n \in [3 n_0,4 n_0]$ and $|S| \geq [\sum_{v \in V_0} n_v - 1 ] \geq 2n_0 - n_0 = n_0$. Finally $\ell_e \in [\epsilon,g]$ for all $e \in E$ by construction.

\paragraph{Girth:}
Let $C$ be a cycle in $G$ and let $u_1,\ldots,u_{c}$ denote the vertices of the cycle in order. Further, let $v^{(1)},\ldots,v^{(c)}$ denote their associated vertices in $V_0$ (i.e.\ $u_i = v^{(i)}_{j_i}$ or $s_{v^{(i)}}$ for all $i \in [c]$ for some $j_i$). 
By construction of $G$ for all $i \in [c]$ either $(v^{(i)},v^{((i\,\mathrm{mod}\,c)+1)}) \in E$ or $v^{(i)} = v^{((i\,\mathrm{mod}\,c)+1)}$. Consequently, if we simply remove vertex duplication in $v^{(1)},\ldots,v^{(c)}$ then the resulting vertex subsequence is either a cycle in $G_0$, in which case $\ell(C) \geq g$, or the sequence has exactly $1$ vertex, i.e.\ all the $v^{(i)}$ are the same (note that there cannot be only~$2$ distinct $v^{(i)}$ since an edge in $E_0$ would then have to be used twice). However, if all the $v^{(i)}$ are the same, then the cycle must be among the vertices $v_{1},\ldots,v_{n_v}$ and $s_v$ for some $v \in V_0$. Since the edges among these vertices of length $\epsilon$ are acyclic, the cycle must use one of the edges of length $g$, and therefore $\ell(C) \geq g$. 

\paragraph{Cycle planting:}
Note that by construction every vertex in $S$ is $v_i$ for some $v \in V_0$ and $i \in [n_v - 1]$. Further, by construction, $v_i$ is incident to one edge of length $\epsilon$, between $1$ and $2$ edges of length $g$, and some number of length $1$. Further, if the edge of length $g$ is given length $1$ then $v_i,v_{i+1},s_v$ yields a cycle of length $1 + 2\epsilon$. Consequently, it only remains to bound the number  of edges incident to $v_i$ of length $1$. However, the number of such edges is either $\lfloor \deg(v) / n_v \rfloor$ or $\lceil \deg(v) / n_v \rceil$. Further, since $n_v > 1$ (as $v_i \in S$) this implies that $1 < \frac{2 \deg(v)}{\davg}$ and the result then follows as
\begin{align*}
\left\lfloor \frac{\deg(v)}{n_v} \right\rfloor
&\geq
\left\lfloor \frac{\deg(v)}{\left(\frac{2 \deg(v)}{\davg}\right) + 1} \right\rfloor
>
\left\lfloor \frac{\deg(v)}{\left(\frac{4 \deg(v)}{\davg}\right)} \right\rfloor
=
\left\lfloor \frac{m_0}{2n_0} \right\rfloor
\text{ and }
\\
\left\lceil \frac{\deg(v)}{n_v} \right\rceil
&\leq 
\left\lceil \frac{\deg(v)}{\left(\frac{2 \deg(v)}{\davg}\right)} \right\rceil
=
\left\lceil \frac{\davg}{2} \right\rceil
=
\left\lceil \frac{m_0}{n_0} \right\rceil
\,.
\end{align*}
\end{proof}


\begin{proof}[Proof of Theorem~\ref{thm:lower_bound}]
Let $k\geq 1$ be a fixed integer, and let $\tau \in (0,1)$ be given.
Let $G_k$ be a $n_0$ node girth conjecture graph for $k$, i.e., $G_0$ has 
 $m_0=\Theta(n_0^{1+1/k})$ edges and girth $2k+2$. Apply Lemma~\ref{lem:short_cycle} to $G_k$ and any $\epsilon<\tau/(2(2k-2-\tau))$, obtaining a graph $G$.
We know that no matter which edge of weight $g$ we pick, if we change its weight to $1$, the girth goes from $\geq g$ to $1+2\epsilon$.
Recall also that $G$ contains a vertex set $S \subseteq V$ with $|S| \geq n_0$ such that the number of edges incident to each vertex of $S$ is between $2+\lfloor m_0/(2n_0)\rfloor$ and $3+\lceil m_0/n_0\rceil$, at most $2$ of which are of weight $>1$. 

Leveraging $G_k$, $G$, and the properties of $G$ given Lemma~\ref{lem:short_cycle} we prove our 
lower bound below:

Consider any deterministic algorithm $A$ for girth approximation running on $G$.
Suppose that $A$ accesses $< \frac{n_0}{4}\lfloor m_0/(2n_0)\rfloor$ edges and let $S_q$ denote the subset of $S$ consisting of vertices which the algorithm queried at least $\lfloor m_0/(2n_0)\rfloor$ times. Note that $|S_q| < \frac{n_0}{4}$. Further, every vertex in $S_q$ is incident to at least $\lfloor m_0/(2n_0)\rfloor$ edges of length $1$ and at most $2$ edges of length $g > 1$, and all edges of length $g$ have both endpoints in $S$. Consequently, at most $2|S_q| < \frac{n_0}{2}$ edges of length $g$ are accessed via queries. However, there are at least $|S|/2 \geq n_0/2$ edges of length $g$ in the graph. Therefore, at least one of the edges of length $g$ is not accessed, and $A$ would perform the same when run on $G$ and when run on $G$ for which one of the weight $g$ edges incident to $s$ were changed to have any length $\geq 1$.

Thus $A$ will fail to distinguish between girth $1+2\epsilon$ and girth $\geq 2k+2$. Further, if the algorithm outputs a cycle containing the edge of length $g$ that was not accessed, then its length could be changed to be arbitrarily large, and the accesses would be consistent. On the other hand, if the algorithm does not output a cycle containing this edge of length $g$ its length could be changed to have length $1,$ and the cycle will have length $\geq 2k + 2$ although the girth is $1 + 2\epsilon$. Consequently, in the worst case the ratio of the girth to the length of the cycle output is at least $(2k + 2)/(1 + 2 \epsilon)$ Since $\epsilon<\tau/(2(2k-2-\tau))$, we have that $-\tau(1+2\epsilon)+2\epsilon(2k-2)<0$ and $(2k-2-\tau)(1+2\epsilon)<(2k-2)$.

Thus any deterministic algorithm that makes fewer than $\frac{n_0}{4}\lfloor m_0/(2n_0)\rfloor = \Theta(n_0^{1+1/k})$ queries, will not be able to compute a cycle $C$ with $\ell(C) \leq (2k+2-\tau)g$ on some input.
\end{proof}

\bibliographystyle{alpha}
\bibliography{bibliography}

\newcommand{\etalchar}[1]{$^{#1}$}
\begin{thebibliography}{KRS{\etalchar{+}}22}

\bibitem[Aga22]{Agarwal22}
Pankaj Agarwal.
\newblock Personal communication, 2022.

\bibitem[AYZ97]{AYZ97}
Noga Alon, Raphy Yuster, and Uri Zwick.
\newblock Finding and counting given length cycles.
\newblock {\em Algorithmica}, 17:209--223, 1997.

\bibitem[Cha86]{Chazelle86}
Bernard Chazelle.
\newblock Filtering search: {A} new approach to query-answering.
\newblock {\em {SIAM} J. Comput.}, 15(3):703--724, 1986.

\bibitem[Dij59]{Di59}
Edsger~W. Dijkstra.
\newblock A note on two problems in connexion with graphs.
\newblock {\em Numerische Mathematik}, 1:269--271, 1959.

\bibitem[DKS17]{DahlgaardKS17a}
S{\o}ren Dahlgaard, Mathias B{\ae}k~Tejs Knudsen, and Morten St{\"{o}}ckel.
\newblock New subquadratic approximation algorithms for the girth.
\newblock {\em CoRR}, abs/1704.02178, 2017.

\bibitem[Duc21]{DBLP:journals/siamdm/Ducoffe21}
Guillaume Ducoffe.
\newblock Faster approximation algorithms for computing shortest cycles on
  weighted graphs.
\newblock {\em {SIAM} J. Discret. Math.}, 35(2):953--969, 2021.

\bibitem[EK10]{kleinbergbook}
D.~Easley and J.~Kleinberg.
\newblock {\em Networks, crowds, and markets: reasoning about a highly
  connected world}.
\newblock Cambridge Univ Press, Cornell, NY, 2010.

\bibitem[Erd64]{Erdos64}
Paul Erd{\H{o}}s.
\newblock Extremal problems in graph theory.
\newblock In {\em Theory of Graphs and its Applications (Proc. Sympos.
  Smolenice, 1963)}, pages 29--36. Publ. House Czechoslovak Acad. Sci., Prague,
  1964.

\bibitem[HW16]{hoppen2016properties}
Carlos Hoppen and Nicholas Wormald.
\newblock Properties of regular graphs with large girth via local algorithms.
\newblock {\em Journal of Combinatorial Theory, Series B}, 121:367--397, 2016.

\bibitem[IR78]{ItaiR78}
Alon Itai and Michael Rodeh.
\newblock Finding a minimum circuit in a graph.
\newblock {\em {SIAM} J. Comput.}, 7(4):413--423, 1978.

\bibitem[KRS{\etalchar{+}}22]{KadriaRSWZ22}
Avi Kadria, Liam Roditty, Aaron Sidford, Virginia {Vassilevska Williams}, and
  Uri Zwick.
\newblock Algorithmic trade-offs for girth approximation in undirected graphs.
\newblock In {\em {SODA}}, pages 1471--1492. {SIAM}, 2022.

\bibitem[LL09]{LingasL09}
Andrzej Lingas and Eva{-}Marta Lundell.
\newblock Efficient approximation algorithms for shortest cycles in undirected
  graphs.
\newblock {\em Inf. Process. Lett.}, 109(10):493--498, 2009.

\bibitem[LVW18]{lincolnsoda}
Andrea Lincoln, Virginia {Vassilevska Williams}, and R.~Ryan Williams.
\newblock Tight hardness for shortest cycles and paths in sparse graphs.
\newblock In Artur Czumaj, editor, {\em Proceedings of the Twenty-Ninth Annual
  {ACM-SIAM} Symposium on Discrete Algorithms, {SODA} 2018, New Orleans, LA,
  USA, January 7-10, 2018}, pages 1236--1252. {SIAM}, 2018.

\bibitem[LW97]{lazebnik1997structure}
Felix Lazebnik and Ping Wang.
\newblock On the structure of extremal graphs of high girth.
\newblock {\em Journal of Graph Theory}, 26(3):147--153, 1997.

\bibitem[McC85]{McCreight85}
Edward~M. McCreight.
\newblock Priority search trees.
\newblock {\em {SIAM} J. Comput.}, 14(2):257--276, 1985.

\bibitem[OPT01]{osthus2001almost}
Deryk Osthus, Hans~J{\"u}rgen Pr{\"o}mel, and Anusch Taraz.
\newblock Almost all graphs with high girth and suitable density have high
  chromatic number.
\newblock {\em Journal of Graph Theory}, 37(4):220--226, 2001.

\bibitem[Pet04]{Pettie04}
Seth Pettie.
\newblock A new approach to all-pairs shortest paths on real-weighted graphs.
\newblock {\em Theor. Comput. Sci.}, 312(1):47--74, 2004.

\bibitem[RT13]{RodittyT13}
Liam Roditty and Roei Tov.
\newblock Approximating the girth.
\newblock {\em {ACM} Trans. Algorithms}, 9(2):15:1--15:13, 2013.

\bibitem[RV11]{RodittyW11}
Liam Roditty and Virginia {Vassilevska Williams}.
\newblock Minimum weight cycles and triangles: Equivalences and algorithms.
\newblock In Rafail Ostrovsky, editor, {\em {IEEE} 52nd Annual Symposium on
  Foundations of Computer Science, {FOCS} 2011, Palm Springs, CA, USA, October
  22-25, 2011}, pages 180--189. {IEEE} Computer Society, 2011.

\bibitem[RV12]{RodittyW12}
Liam Roditty and Virginia {Vassilevska Williams}.
\newblock Subquadratic time approximation algorithms for the girth.
\newblock In Yuval Rabani, editor, {\em Proceedings of the Twenty-Third Annual
  {ACM-SIAM} Symposium on Discrete Algorithms, {SODA} 2012, Kyoto, Japan,
  January 17-19, 2012}, pages 833--845. {SIAM}, 2012.

\bibitem[Spi73]{Spira73}
Philip~M. Spira.
\newblock A new algorithm for finding all shortest paths in a graph of positive
  arcs in average time {$O(n^2\log^2 n)$}.
\newblock {\em {SIAM} J. Comput.}, 2(1):28--32, 1973.

\bibitem[ST86]{SarnakT86}
Neil Sarnak and Robert~Endre Tarjan.
\newblock Planar point location using persistent search trees.
\newblock {\em Commun. {ACM}}, 29(7):669--679, 1986.

\bibitem[TZ05]{ThorupZ05}
Mikkel Thorup and Uri Zwick.
\newblock Approximate distance oracles.
\newblock {\em J. {ACM}}, 52(1):1--24, 2005.

\bibitem[VW18]{WilliamsW18}
Virginia {Vassilevska Williams} and R.~Ryan Williams.
\newblock Subcubic equivalences between path, matrix, and triangle problems.
\newblock {\em J. {ACM}}, 65(5):27:1--27:38, 2018.

\bibitem[Wil85]{Willard85}
Dan~E. Willard.
\newblock New data structures for orthogonal range queries.
\newblock {\em {SIAM} J. Comput.}, 14(1):232--253, 1985.

\bibitem[Wil18]{ryanapsp}
R.~Ryan Williams.
\newblock Faster all-pairs shortest paths via circuit complexity.
\newblock {\em {SIAM} J. Comput.}, 47(5):1965--1985, 2018.

\bibitem[WZ15]{WiZw15}
David~Bruce Wilson and Uri Zwick.
\newblock A forward-backward single-source shortest paths algorithm.
\newblock {\em {SIAM} J. Comput.}, 44(3):698--739, 2015.

\bibitem[YZ97]{YusterZ97}
Raphael Yuster and Uri Zwick.
\newblock Finding even cycles even faster.
\newblock {\em {SIAM} J. Discret. Math.}, 10(2):209--222, 1997.

\end{thebibliography}
\end{document}